\documentclass[11pt,a4paper]{amsart}
\usepackage{ulem} 
\usepackage{times,epsfig}
\usepackage{amsfonts,amscd,amsmath,amssymb,amsthm}%
\usepackage{mathrsfs}
\usepackage{xcolor}
\numberwithin{equation}{section}

\DeclareMathOperator{\Ran}{Ran}

\newcommand{\R}{\mathbb{R}}

\newcommand{\C}{\mathbb{C}}
\newcommand{\Z}{\mathbb{Z}}
\newcommand{\N}{\mathbb{N}}

\newcommand{\fC}{\mathfrak{C}}

\newcommand{\cA}{{\mathcal A}}

\newcommand{\cH}{{\mathcal H}}

\newcommand{\cM}{{\mathcal M}}
\newcommand{\cN}{{\mathcal N}}
\newcommand{\cO}{{\mathcal O}}

\newcommand{\cR}{{\mathcal R}}

\newcommand{\rd}{\mathrm{d}}
\newcommand{\e}{\mathrm{e}}

\marginparwidth 20mm
\addtolength{\textheight}{30mm}
\addtolength{\textwidth}{20mm}
\addtolength{\topmargin}{-20mm}
{\bf}{\it}
{\bf}{\it}
{\bf}{\it}
{\bf}{\it}
{\bf}{\it}

\newtheorem{theorem}{Theorem}[section]{\bf}{\it}
\newtheorem{proposition}[theorem]{Proposition}{\bf}{\it}
{\bf}{\it}
\newtheorem{example}[theorem]{Example}{\it}{\rm}
\newtheorem{lemma}[theorem]{Lemma}{\bf}{\it}
\newtheorem{remark}[theorem]{Remark}{\it}{\rm}
\newtheorem{definition}[theorem]{Definition}{\bf}{\it}
{\bf}{\it}
{\bf}{\it}
{\bf}{\it}

\title{Reflection positivity in simplicial gravity}

\author[R. Schrader]{Robert Schrader}
\address{Robert Schrader\\ Institut f\"{u}r
Theoretische Physik, Freie Universit\"{a}t Berlin, \newline Arnimallee 14,
 D-14195 Berlin, Germany}
\email{robert.schrader@fu-berlin.de}
\date{\today File: \textbf{latgrav11.tex}}

\begin{document}

\begin{abstract}
Within the context of piecewise linear manifolds we establish reflection positivity with a Hilbert 
action given in terms of the Regge curvature and a cosmological term. Using this positivity a Hilbert space for 
a quantum theory is constructed and some field operators and observables are given. The set-up allows to introduce 
time reversal though no time exists.
All constructions are non-perturbative.
\end{abstract}

\maketitle

\section{Introduction}
The aim of this article is to prove reflection positivity in simplicial gravity.

First we give a short outline of the historical background.
Functional integrals provide an important non-perturbative tool in quantum physics. First applications 
in theories with finite degrees of freedom date back to Dirac \cite{Dirac} and to Feynman \cite{Feynman}.
Functional integrals give additional insight and provide new results with effective and short proofs. This holds 
in particular for Schr\"{o}dinger operators, for an account see {\it e.g.}
\cite{GlimmJaffe,Simon}. It was Symanzik \cite{Symanzik1,Syamnzik2}, who in the 60's strongly advertised the
use the theory of functional integrals in quantum field theories. It was taken up by Nelson \cite{Nelson} 
and its culmination was {\it constructive quantum field theory}, by which several relativistic quantum field theoretic 
models in 2 and 3 space-time dimensions could be constructed, see \cite{GlimmJaffe} for an overview. 
Another culmination originated from the seminal article by 
K.G. ~Wilson \cite{Wilson}, see also \cite{Polyakov}, who provided a lattice formulation of gauge and quark fields 
on the lattice. Prior to that F. Wegner \cite{Wegner} introduced many concepts including Wilson- and 
't Hooft-loops, however, only for 
the group $\Z_2$.
These articles improved our understanding of gauge theories, which was extended  
by detailed numerical calculations including Monte Carlo simulations.

Stimulated by the success in lattice gauge theories, in 1981 the proposal was made to combine 
Regge calculus \cite{Regge} with functional integration methods into a non-perturbative approximation 
scheme  for quantum gravity \cite{CMS1,F,RoWi}. Although Regge calculus originated as a concept in the geometry of 
piecewise linear (p.l.) spaces, it was Wheeler, who speculated on the possibility of employing it as 
a tool for constructing a quantum theory of gravity \cite{Wheeler}.
For this the names {\it lattice gravity} or 
{\it simplicial gravity} are often used, for overviews see e.g. \cite{Hamber,ReggeWilliams}. 
Meanwhile many numerical studies, including the renormalization group and the lattice continuum limit 
within Regge calculus have been carried out, see \cite{Hamber} sec. 8 and \cite{Hamber1}.

Now functional integration alone is not sufficient for obtaining a quantum theory. The addition 
ingredient is {\it reflection positivity}, originally formulated for Euclidean Green's functions (Schwinger functions). 
Reflection positivity is essentially necessary and sufficient to obtain a Hilbert space for a quantum field theory in the sense of Wightman 
\cite{Wightman, GardingWightman}. In \cite{Luescher,MenottiPelissetto, OsterwalderSeiler,Seiler} it was then 
shown that lattice gauge theories indeed satisfy reflection positivity. \cite{FILS} gives a systematic account of 
lattice models in statistical mechanics, which satisfy reflexions positivity.
Also the unitarity of certain representations of the {\it Virasoro algebra} and hence the existence of a Hilbert space
is equivalent to reflection positivity for the correlation function, see the comments in \cite{FQS}.

For more recent work on reflection 
positivity, sometimes in completely different contexts like Lie groups, see {\it e.g.} \cite{Jaffe,JaffeJanssens,Neeb-Olafsson}.

Next we explain the structure and the results of this article.
Informally a piecewise linear manifold is obtained by gluing together euclidean simplexes. So less informally
there is an underlying  simplicial complex and a set of lengths associated to the edges, which we call a metric on 
this simplicial complex. These lengths are subject to certain constraints, which turn out to be non-holonomic and 
which may be viewed as extensions of the triangle inequalities and of the condition that edge lengths have to be 
positive. We will only work with finite simplicial complexes. So Section \ref{sec:basics} serves as a 
preparation for the main result in that it recalls the main notions, definitions and known results, which will be 
used.
The question which conditions on a simplicial complex give rise to a {\it transfer matrix} will not be 
addressed. Also we will not consider the problems arising when 
taking the {\it thermodynamic limit}. Finally we will not consider the question of how to take 
the {\it scale limit}. 

In \cite{GlimmJaffe} p. 446 a whole section is devoted to reflection positivity (r.p.) on lattice approximations.
Mostly cubic lattices in $\R^d,2\le d\le 4$ are considered and there r.p. is preserved. 
In Regge calculus lattices are replaced by simplicial complexes and therefore r.p. has to be 
established anew. Thus the central definition in this article provides the geometric concept of a 
{\it reflection} in a particular class of finite 
pseudomanifolds, see Definition \ref{ref:def1} in Section \ref{sec:RP}. 

The main result on reflection positivity for a given reflection is then stated in Theorem \ref{theo:main}. 
A central point 
was to find suitable cut-offs, which do not spoil this geometric reflection and hence allow to prove reflection 
positivity. We recall the well known fact that in many models
reflection positivity holds on a formal level. But after the introduction of cut-offs, necessary in order 
to make interesting quantities finite (like the partition function), several seemingly sensible ones 
destroy reflection positivity. 
In addition, the problem of finding the appropriate action on lattices and satisfying reflection positivity 
is well known in lattice gauge theories, see the comments made in \cite{OsterwalderSeiler}.
The proposal we make on the cut-offs respect the constraints on the metric. 
It is satisfying that they include both a short distance and a long distance cut-off, 
corresponding intuitively to an ultraviolet and an infrared cut-off.
Having established reflection positivity, the construction of a (quantum) Hilbert space for simplicial gravity 
in Section \ref{sec:Hilbertspace} is then standard. Additional structure on the Hilbert space is obtained by
constructing field operators. 

\vspace{0.2cm}
\textbf{Acknowledgements.} The author thanks R. ~Flume, A. ~Jaffe, A. ~Knauf, and E. ~Seiler for 
helpful comments.

\section{Basic Notations and Definitions}\label{sec:basics}

In this section we briefly recall some definitions, notions and results, and which we shall make use of in the sequel.
 For more details on basic definitions and notations we refer to \cite{Spanier}.
\\
A {\it finite simplicial complex} $K$ consists of a finite set of elements called {\it vertices}
and a set of finite nonempty subsets of vertices called {\it simplexes} such that
\begin{enumerate}
\item{ Any set containing only one vertex is a simplex.}
\item{Any nonempty subset of a simplex is also a simplex.}
\end{enumerate}
A $j$-simplex will generally be denoted by $\sigma^j$. The {\it dimension} $j$ is the number of its
vertices minus 1. The 1-simplexes are called {\it edges}. 
If $\sigma^\prime\subseteq \sigma$, 
then $\sigma^\prime$ is called a {\it face}
of $\sigma$ and a {\it proper face} if $\sigma^\prime\neq\sigma$. 
We set $\dim K = \sup_{\sigma\in K}\dim \sigma$ and occasionally we shall write $K^n$ with 
$\dim K=n$, when we want to emphasize the dimension of $K$. 
A complex $L$ is called a
{\it subcomplex} of $K$ if the simplexes of $L$ are also simplexes of $K$. We then write $L \subseteq K$.

We shall almost exclusively consider special simplicial complexes given as follows. 
An $n-$dimensional {\it pseudomanifold} is a simplicial complex $K^n$ such that 
\begin{enumerate}
 \item{Every simplex is a face of some $n$-simplex.}
 \item{Every $(n-1)$-simplex is the face of at most two $n$-simplexes.}
 \item{If $\sigma$ and $\sigma^\prime$ are $n$-simplexes of $K^n$, there is a finite sequence 
$\sigma=\sigma_1,
\cdots, \sigma_m=\sigma^\prime$ of $n$-simplexes of $K$, such that $\sigma_i$ 
and $\sigma_{i+1}$ have an $(n-1)$-simplex in common.}
\end{enumerate}
Unless otherwise stated, the dimension $n$ will always be taken to be $\ge 3$.
The (possibly empty) boundary $\partial K^n$ of $K^n$ is the subcomplex formed by 
those $(n-1)$-simplexes, which lie in exactly one $n$-simplex, together with their faces. 
$\partial K^n$ is not necessarily a pseudomanifold. Given $K^n$ let $\Sigma^1(K^n)$ denote the set of 
$1-$simplexes in $K^n$.

By definition a {\it metric on $K^n$} is a map 
$$
\underline{z}:\;\Sigma^1(K^n)\quad \rightarrow\quad \R_+\;:\quad \sigma^1\;\mapsto\;z^{\sigma^1}
$$
satisfying the following properties. 
\begin{itemize}
\item{For each $k-$ simplex $\sigma^k\in K^n$, one can build a euclidean $k-$simplex 
$\sigma^k(\underline{z})$ whose edge lengths are given as $l^{\sigma^1}
=\sqrt{z^{\sigma^1}},\;\sigma^1\subseteq \sigma^k$.}
\end{itemize}
When these conditions are satisfied, $(K^n,\underline{z})$ is called a {\it piecewise linear (p.l.) space} of 
dimension $n$. Due to the third condition for a pseudomanifold, each $(K^n,\underline{z})$ is a connected space. 
For precise details, see \cite{CMS2}.
The analogous quantity in the continuum case is a pair $(M^n,g)$ where $M^n$ is a smooth 
$n-$dimensional manifold and
$g$ is any Riemannian metric on $M^n$. Further analogies between quantities 
in the p.l. and the continuum context will be provided below, and more are given in the first 
appendix of \cite{Schrader}.

Given $\underline{z}$, we denote by $|\sigma^k|(\underline{z})$ the volume of this euclidean $k-$simplex 
$\sigma^k(\underline{z})$, 
see also \eqref{amatrixvol} below. 
The {\it volume of $(K^n,\underline{z})$} is by definition 
\begin{equation}\label{volume}
V(K^n,\underline{z})=V(\underline{z})=\sum_{\sigma^n\in K^n}|\sigma^n|(\underline{z}).
\end{equation}

Let $(K^n,\underline{z})$ be given and let $\sigma^{n-2}\subset \sigma^n$. The {\it dihedral angle} 
$(\sigma^{n-2},\sigma^n)(\underline{z})$ at 
$\sigma^{n-2}(\underline{z})$ in $\sigma^n(\underline{z})$ may be defined as 
follows. Let the unique 
$\sigma^{n-1}_1, \sigma^{n-1}_2\subset \sigma^n$ 
be such that $\sigma^{n-2}=\sigma^{n-1}_1\cap\sigma^{n-1}_2$. In their realization as 
euclidean simplexes in $E^n$, let $\bf{n}_1$ and $\bf{n}_2$ be unit vectors, 
normal to $\sigma^{n-1}_1$ and $\sigma^{n-1}_2$ respectively and pointing outwards. 
Then the dihedral angle $0<(\sigma^{n-2},\sigma^n)<1/2$ (in units of $2\pi$) is defined as 
\begin{equation}\label{dihedral}
(\sigma^{n-2},\sigma^n)(\underline{z})=(\sigma^{n-2},\sigma^n)=\frac{1}{2}-
\frac{1}{2\pi}\arccos\langle \bf{n}_1, \bf{n}_2\rangle.
\end{equation} 
It is easy to see that this quantity is independent of the particular euclidean realization of 
$\sigma^{n-2}(\underline{z})$ and $\sigma^n(\underline{z})$.

The quantity 
\begin{align}\label{app:R}
\cR(K^n,\underline{z})=\cR(\underline{z})&=\sum_{\sigma^{n-2}\in K^n,\;\sigma^{n-2}\notin \partial K^n}
\left(1-\sum_{\sigma^n\in K^n,\;\sigma^n\supset\sigma^{n-2}}(\sigma^{n-2},\sigma^n)(\underline{z})\right)
|\sigma^{n-2}|(\underline{z})\\\nonumber
&\qquad\qquad +\sum_{\sigma^{n-2}\in \partial K^n}
\left(\frac{1}{2}-\sum_{\sigma^n\in K^n,\; \sigma^n\supset\sigma^{n-2}}
(\sigma^{n-2},\sigma^n)(\underline{z})\right)
|\sigma^{n-2}|(\underline{z})
\end{align}
is called the {\it Regge curvature of $(K^n,\underline{z})$}. It is analogous to the total scalar 
curvature in 
Riemannian geometry. In fact, in \cite{CMS2} a detailed account is given, how the Regge curvature 
approaches the total scalar curvature of a given Riemannian manifold $(M,g)$ through a sequence of 
p.l. spaces close to $(M,g)$.

By $\cM(K^n)$ we denote the set of all metrics on $K^n$. For a given ordering of the set 
$\Sigma^1(K^n)$ of edges in 
$K^n$,
$\cM(K^n)$ may be viewed as a subset of $\R_+^{n_1(K^n)}$, where $n_1(K^n)=|\Sigma^1(K^n)|$ is the 
number of edges in $K^n$. There is the following result, see \cite{Ge, Schrader}.
\begin{theorem}{\rm(Ge,Schrader)}\label{theo:convex}
 $\cM(K^n)$ is a non-empty, open convex cone.  
 \end{theorem}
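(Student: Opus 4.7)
The plan is to reduce the defining conditions of $\cM(K^n)$ to the positive-definiteness of certain symmetric matrices whose entries depend linearly on $\underline{z}$, and then invoke the fact that the cone of positive-definite matrices is open and convex.

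\textbf{Step 1 (Gram matrix reformulation).} Fix a simplex $\sigma^k \in K^n$ with vertices $v_0, v_1, \ldots, v_k$ (the case $k=1$ is trivial since $z^{\sigma^1}>0$ is built into the definition of the map $\underline{z}$). A Euclidean realization $\sigma^k(\underline{z})$ with edge lengths $\sqrt{z^{\sigma^1}}$ exists if and only if the vectors $\mathbf{e}_i = v_i - v_0$ in some Euclidean space have the prescribed pairwise distances and are affinely independent, which in turn is equivalent to the Gram matrix
\begin{equation*}
G(\underline{z},\sigma^k)_{ij} \;=\; \tfrac{1}{2}\bigl(z^{\{v_0,v_i\}} + z^{\{v_0,v_j\}} - z^{\{v_i,v_j\}}\bigr), \qquad 1\le i,j\le k,
\end{equation*}
(with $G_{ii}=z^{\{v_0,v_i\}}$) being positive definite. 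The essential point is that each entry of $G(\underline{z},\sigma^k)$ is a \emph{linear} functional of $\underline{z} \in \R_+^{n_1(K^n)}$.

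\textbf{Step 2 (open convex cone structure).} The set $\fP_k$ of positive-definite symmetric $k\times k$ matrices is an open convex cone in the space of symmetric matrices: openness follows from continuity of eigenvalues (or of leading principal minors), while convexity and the cone property follow because $\langle x, (\alpha A+\beta B)x\rangle = \alpha\langle x,Ax\rangle+\beta\langle x,Bx\rangle > 0$ for $\alpha,\beta>0$ and nonzero $x$. Since $\underline{z}\mapsto G(\underline{z},\sigma^k)$ is linear (hence continuous), the preimage
\begin{equation*}
\cM_{\sigma^k} \;=\; \{\underline{z}\in \R_+^{n_1(K^n)}\,:\, G(\underline{z},\sigma^k) \text{ is positive definite}\}
\end{equation*}
is again an open convex cone. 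Intersecting over the finitely many simplexes of $K^n$ of dimension $\ge 1$ yields
\begin{equation*}
\cM(K^n) \;=\; \bigcap_{\sigma^k \in K^n,\; k\ge 1} \cM_{\sigma^k},
\end{equation*}
which is therefore an open convex cone.

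\textbf{Step 3 (non-emptiness).} Take the ``regular'' assignment $z^{\sigma^1}\equiv 1$ for all edges. Then every Gram matrix $G(\underline{z},\sigma^k)$ equals $\tfrac12(I_k+J_k)$, where $J_k$ is the $k\times k$ all-ones matrix. Its eigenvalues are $\tfrac{k+1}{2}$ and $\tfrac{1}{2}$ (with multiplicity $k-1$), all strictly positive; equivalently, this metric realizes every $\sigma^k$ as a regular Euclidean $k$-simplex of unit edge length. Hence $\cM(K^n)\ne\emptyset$.

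The only point demanding a little care is the equivalence in Step~1 between the existence of a Euclidean realization of $\sigma^k(\underline{z})$ and positive definiteness of $G(\underline{z},\sigma^k)$; once that linear-algebraic fact is in hand, convexity, openness and the cone property are immediate, so I do not foresee any genuine obstacle.
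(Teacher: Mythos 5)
Your proof is correct and takes essentially the same route as the paper: there, too, membership in $\cM(K^n)$ is reformulated via positive definiteness of the matrices $A(\underline{z}(\sigma))$ of \eqref{amatrix}, whose entries are linear in $\underline{z}$, so that openness, convexity and the cone property are inherited from the open convex cone of positive-definite matrices under finite intersection, with non-emptiness supplied by an explicit metric (your equilateral choice $z^{\sigma^1}\equiv 1$ realizing each $\sigma^k$ as a regular simplex). I see no gaps.
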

\begin{theorem} {\rm (Schrader)}\label{theo:smooth}
Both $\cR$ and $V$ are smooth functions on 
$\cM(K^n)$.
\end{theorem}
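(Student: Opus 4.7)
The plan is to reduce both claims to standard facts: any smooth (even polynomial) function of the $z^{\sigma^1}$ that stays away from its singular set gives a smooth function on $\cM(K^n)$, and the defining condition of $\cM(K^n)$ (that every $\sigma^k\in K^n$ admits a genuine Euclidean realization) is precisely what excludes the dangerous loci.

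First I handle $V$. For any $k$-simplex $\sigma^k$, the squared Euclidean volume $|\sigma^k|(\underline{z})^2$ is given by the Cayley--Menger determinant in the variables $z^{\sigma^1}$ with $\sigma^1\subseteq\sigma^k$, i.e.\ it is a polynomial $P_{\sigma^k}(\underline{z})$. By definition of a metric on $K^n$, every $\sigma^k$ is realizable as a non-degenerate Euclidean simplex on $\cM(K^n)$, so $P_{\sigma^k}(\underline{z})>0$ there. Hence $|\sigma^k|(\underline{z})=\sqrt{P_{\sigma^k}(\underline{z})}$ is smooth, and $V=\sum_{\sigma^n}|\sigma^n|$ is a finite sum of smooth functions, so smooth. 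This also handles the factors $|\sigma^{n-2}|(\underline{z})$ appearing in \eqref{app:R}.

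Next I show that each dihedral angle $(\sigma^{n-2},\sigma^n)(\underline{z})$ is smooth. Fix the $n$-simplex $\sigma^n$ and a vertex $v$ of it; the edge vectors $e_1,\dots,e_n$ issuing from $v$ in any Euclidean realization satisfy $\langle e_i,e_j\rangle=\tfrac12(z^{\{v,v_i\}}+z^{\{v,v_j\}}-z^{\{v_i,v_j\}})$ by the polarization identity, so the Gram matrix $G(\underline{z})$ is polynomial in $\underline{z}$ and invertible on $\cM(K^n)$ (its determinant equals $(n!)^2|\sigma^n|^2$, which is strictly positive there). The outward unit normals ${\bf n}_1,{\bf n}_2$ to the two facets $\sigma^{n-1}_1,\sigma^{n-1}_2$ sharing $\sigma^{n-2}$ can be written as explicit rational functions of the entries of $G$ and $G^{-1}$ via the standard expressions (or, equivalently, $\cos\theta=\langle{\bf n}_1,{\bf n}_2\rangle$ is the ratio of a Cayley--Menger-type determinant to the product of the $(n-1)$-volumes of the two facets). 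Hence $\langle{\bf n}_1,{\bf n}_2\rangle$ is a rational function of $\underline{z}$ with non-vanishing denominator on $\cM(K^n)$, and therefore smooth. The quantity is manifestly independent of the chosen Euclidean realization, consistent with the remark after \eqref{dihedral}.

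Finally I must check that $\arccos$ is applied in its smooth regime, i.e.\ that $\langle{\bf n}_1,{\bf n}_2\rangle\in(-1,1)$ strictly on $\cM(K^n)$. Since ${\bf n}_1,{\bf n}_2$ are unit vectors, Cauchy--Schwarz gives $|\langle{\bf n}_1,{\bf n}_2\rangle|\le 1$, with equality iff ${\bf n}_1=\pm {\bf n}_2$, i.e.\ iff the two facets $\sigma^{n-1}_1,\sigma^{n-1}_2$ lie in parallel hyperplanes. As they share the $(n-2)$-face $\sigma^{n-2}$, this would force $\sigma^n$ to be degenerate, contradicting $\underline{z}\in\cM(K^n)$. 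Thus $\arccos\langle{\bf n}_1,{\bf n}_2\rangle$ is smooth, so $(\sigma^{n-2},\sigma^n)(\underline{z})$ is smooth, and $\cR$, being a finite linear combination of products of smooth functions as in \eqref{app:R}, is smooth. The only slightly delicate step is the last one (ruling out the endpoints of $\arccos$), but it follows directly from the non-degeneracy built into the definition of $\cM(K^n)$.
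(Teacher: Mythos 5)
Your proof is correct, but note that the paper itself contains no proof of Theorem \ref{theo:smooth} to compare against: the result is quoted from \cite{Schrader}, so your argument has to stand on its own --- and it does. Your reduction is the natural one given the description \eqref{metricspace2} of $\cM(K^n)$: by \eqref{amatrixvol}, $|\sigma^k|(\underline{z})=\frac{1}{k!}\sqrt{\det A(\underline{z}(\sigma^k))}$ with $\det A$ a polynomial in the $z^{\sigma^1}$, strictly positive on $\cM(K^n)$ for \emph{every} face $\sigma^k$ (not just the top-dimensional ones), and $t\mapsto\sqrt{t}$ is smooth on $(0,\infty)$; this settles $V$ and the factors $|\sigma^{n-2}|$ in \eqref{app:R}. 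For the dihedral angles the two delicate points are exactly the ones you isolate: smoothness of $\langle\mathbf{n}_1,\mathbf{n}_2\rangle$ in $\underline{z}$, and strict avoidance of the endpoints $\pm 1$ where $\arccos$ fails to be smooth.

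One small imprecision: $\langle\mathbf{n}_1,\mathbf{n}_2\rangle$ is \emph{not} a rational function of $\underline{z}$. It has the form $Q(\underline{z})/\bigl(|\sigma^{n-1}_1|\,|\sigma^{n-1}_2|\bigr)$ with $Q$ polynomial (for instance via the vector-area normals $N_i=|\sigma^{n-1}_i|\,\mathbf{n}_i$, whose mutual inner products are polynomials in the $z$'s), so the denominator involves square roots of the facet Gram determinants; your parenthetical Cayley--Menger formulation is the accurate one, and since those determinants are strictly positive on $\cM(K^n)$ the smoothness conclusion is unaffected. Your equality-case analysis for the endpoints is also sound: $|\langle\mathbf{n}_1,\mathbf{n}_2\rangle|=1$ would make the two facet hyperplanes parallel, and since both contain the affine span of the common face $\sigma^{n-2}$ they would coincide, forcing $\det A(\underline{z}(\sigma^n))=0$, which \eqref{metricspace2} excludes; hence $\arccos$ in \eqref{dihedral} is applied on the open interval $(-1,1)$, each $(\sigma^{n-2},\sigma^n)(\underline{z})$ is smooth, and \eqref{app:R} exhibits $\cR$ as a finite sum of products of smooth functions.
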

The analogy to the smooth case is as follows. If $g_1$ and $g_2$ are two metrics on $M^n$, then 
$\lambda_1 g_1+\lambda_2 g_2\; (0<\lambda_1,0\le \lambda_2)$ is also a metric on $M^n$.

The proof of the first theorem, as given in \cite{Schrader}, is obtained from an insight into 
the geometric structure of $\cM(K^n)$, which we briefly explain now and which will turn out to be useful 
when we introduce cut-offs. 

First consider a euclidean $k$-simplex $\sigma^k$ in $E^k\;(1\le k)$, and label its $k+1$ vertices in an 
arbitrary order 
as $0,1,\cdots, k$. Assume the vertex $0$ is placed at the origin. We regard the other vertices as being
represented by the (linearly independent) vectors $v_i, 1\le i\le k$.
Then the length $l_{ij}=l_{ji}$ of the edge connecting the two different vertices $i$ and $j$ 
is given in the form 
\begin{align*}
z_{0j}&=l_{0j}^2=\langle v_j,v_j\rangle,\qquad\qquad \qquad 1\le j\le k,\\\nonumber
z_{ij}&=l_{ij}^2=\langle v_i- v_j,v_i-v_j\rangle,\qquad 1\le i,j\le k.
\end{align*}
We make the convention $z_{ii}=l_{ii}^2=0$.
As a consequence the $k\times k$ real, symmetric matrix $A=A(\underline{z})$ , 
$\underline{z}=\{z_{ij}\}_{0\le i,j\le k}$, with entries
\begin{equation}\label{amatrix}
 a_{ij}=a_{ji}=\langle v_i,v_j\rangle=\frac{1}{2}(z_{0i}+z_{0j}-z_{ij}),\qquad 1\le i,j\le k
\end{equation}
is positive definite. The volume of the euclidean $k$-simplex is obtained as
\begin{equation}\label{amatrixvol}
|\sigma^k|=|\sigma^k|(\underline{z})=\frac{1}{k!}\det A^{1/2}=
\frac{1}{k!}
\left(\langle v_1\wedge v_2\wedge \cdots v_k,
v_1\wedge v_2\wedge \cdots v_k\rangle\right)^{1/2}.
\end{equation}
Moreover $\det A(\underline{z})$ is independent of the particular choice of the ordering of the vertices.
It is a homogeneous polynomial in the $z_{ij}\;(i<j)$ of order $k$, totally symmetric in these variables.
For the particular case $k=2$ this relation gives the area of a triangle in terms 
of its edge lengths (squared), originally attributed to Heron of Alexandria. 

The converse is also true: Given a real, positive definite ( and hence symmetric) $k\times k$ matrix $A$, define 
$\underline{z}$ by $z_{0i}=a_{ii}$ and $z_{ij}= a_{ii}+a_{jj}-2a_{ij},\;(1\le i,j\le k)$. 
Then there is a euclidean $k-$simplex 
with edge lengths given as $l_{0i}=\sqrt{z_{0i}},\,l_{ij}=\sqrt{z_{ij}}$.

We will now use the following well known fact: The $k\times k$ symmetric matrix $B$ is positive definite if and 
only if $B$  
and all its quadratic sub-matrices along the diagonal have positive determinant, see {\it e.g.} 
\cite{Bhatia, TerrasII}.
By this discussion we obtain the following result: Let $\underline{z}(\sigma)
=\{z^{\sigma^1}\}_{\sigma^1\subseteq \sigma}$ and observe that $\underline{z}(\sigma^1)=z^{\sigma^1}$. 

The space of all metrics $\cM(K^n)$ on $K^n$ is given by a set of non-holonomic constraints in $\R_+^{n_1(K^n)}$
\begin{equation}\label{metricspace2}
\cM(K^n)=\left\{\;\underline{z}\in \R^{n_1(K^n)}\;|\; 
\det A(\underline{z}(\sigma))>0\;\mbox{for all}\;\sigma\in K^n,\, 1\le \dim \sigma\le n\;\right\}.
\end{equation}

\begin{remark}
These constraints are not independent of each other. In fact, let 
$$
\tau^1\subset\tau^2\subset\cdots\subset\tau^{n-1}\subset\tau^n
$$
be any increasing sequence of $n$ simplexes in $K^n$. If $\det A(\underline{z}(\tau^k))>0$ holds for all 
$1\le k\le n$ then 
$\det A(\underline{z}(\tau^\prime))>0$ for each simplex $\tau^\prime\; (\dim \sigma^\prime >0)$ in 
$\tau^n$.
\end{remark}

\section{Hilbert actions, volume forms on the space of all metrics and partition functions.}

With the total scalar curvature $\cR$ and the volume $V$ we may define the
{\it Hilbert action}
\begin{equation}\label{Hilbert}
H_{\gamma,\lambda}(\underline{z})= \gamma \cR(\underline{z})+\lambda V(\underline{z}),
\end{equation}
which is well defined everywhere on $\cM(K^n)$. The second term is just the familiar 
{\it cosmological term}. Let $\fC$ be any compact set in $\cM(K^n)$. Also let $\rd \mu$ be any volume 
form on $\fC$.
These data allow us to define the partition function
\begin{equation}\label{partition}
Z_{\fC}(\gamma,\lambda)=\int_{\fC}\e^{-H_{\gamma,\lambda}(\underline{z})}\rd\mu(\underline{z}). 
\end{equation}
The compactness of $\fC$ guarantees that the partition function is a finite and positive number.
In fact, due to the smoothness of $\cR(\underline{z})$ and $V(\underline{z})$ on $\cM(K^n)$ (see Theorem 
\ref{theo:smooth}), both these quantities are bounded on $\fC$.
In this article we shall use the following volume form 
\begin{equation}\label{volumeform}
 \rd\mu(\underline{z})=\prod_{\sigma^1}\rd z^{\sigma^1}.
\end{equation}
For a detailed discussion of this volume form, see \cite{HW2}. This volume form has been used 
extensively in computer simulations.
When the dimension equals $n=4$ it corresponds to the Misner measure 
\cite{FaddeevPopov, Faddeev,Misner}.

For comparison to be made below, we note that
\begin{equation}\label{dnu}
 \rd\nu =\frac{1}{Z_{\fC}(\gamma,\lambda)} \; \rd\mu
\end{equation}
is a probability measure on $\fC$. There is the associated space $L^2(\fC,\rd \nu)$ of square 
integrable functions, which becomes a Hilbert space via the scalar product
\begin{equation}\label{eucl1}
 \langle g,f\rangle_{L^2}=\int_{\fC} \overline{g(\underline{z})}f(\underline{z})\;\rd\nu(\underline{z}). 
\end{equation}
Furthermore mean values like ones for the Regge curvature and for the volume
\begin{equation}\label{L^2exp}
 \langle R \rangle =\int_{\fC} R(\underline{z})\; \rd\nu(\underline{z}) ,\quad 
\langle V \rangle=\int_{\fC} V(\underline{z})\; \rd\nu(\underline{z}).      
\end{equation}
have been studied extensively with numerical methods like Monte Carlo 
simulations, see \cite{Hamber}, sec. 8 and the references quoted there.
The relations
\begin{equation}\label{eucl3}
 \langle R \rangle=-\frac{\partial}{\partial \gamma}\ln Z_{\fC}(\gamma,\lambda),\quad
 \langle V\rangle=-\frac{\partial}{\partial \lambda}\ln Z_{\fC}(\gamma,\lambda)
\end{equation}
are familiar from statistical mechanics.

\section{Reflection positivity}\label{sec:RP}
In this section we formulate our main result. We start with some preparations.

\subsection{Reflections}
In this subsection we will study special automorphisms of simplicial complexes. In particular they will be 
involutive automorphisms. Let $\phi$ be such an involutive automorphism (i.e. $\phi=\phi^{-1}$) of
the simplicial complex $K$ and 
let $K^\prime$ be a subcomplex. Then the subcomplex $K^\prime\cup \phi K^\prime$, which {\it a priori} is 
possibly empty, is left invariant under $\phi$. 
Assume in addition that $\phi$ is the the identity on $K^\prime\cap \phi K^\prime$. By definition
the (simplicial) {\it gluing} of $K^\prime$ to $\phi K^\prime$ along $K^\prime\cap \phi K^\prime$, 
written as
\begin{equation}\label{ref:glue}
K^\prime\uplus_{K^\prime\cap \phi K^\prime}\phi K^\prime,
\end{equation}
is the simplicial complex obtained by first considering the disjoint union of $K^\prime$ and $\phi K^\prime$
and then by identifying each simplex $\sigma\in K^\prime\cap \phi K^\prime$, viewed as a simplex in $K^\prime$,
with the same simplex, now viewed as a simplex in $\phi K^\prime$. 
It is easy to prove that \eqref{ref:glue}
defines a subcomplex of $K$.
\begin{definition}\label{ref:def1}
An involutive automorphism of the pseudomanifold $K^n$, not equal to the identity, is called a {\rm reflection}, 
and written as $\vartheta$, if the following properties are 
 satisfied. There are two subcomplexes $K_+,K_-$ of $K$ such that 
 \begin{itemize}
 \item{$K_+$ and $K_-$ are pseudomanifolds of dimension $n$,}
 \item{$K_-=\vartheta K_+$, and hence $K_+=\vartheta K_-$ holds,}
 \item{The subcomplex $K_0=K_+\cap K_-$ is a non-empty $(n-1)-$dimensional pseudomanifold,}
 \item{$\vartheta$ is the identity on $K_0=K_+\cap K_-$,} 
 \item{$K=K_+\uplus_{K_+\cap K_-}K_-$.}
 \end{itemize}
 \end{definition}
$K_0$ is a subcomplex of both $K_+$ and $K_-$, which not necessarily is a pseudomanifold. Furthermore any $\sigma^{n-1}\in K_0$ is contained in exactly one 
$n-$simplex $\sigma^n\in K_+$ and one $n-$simplex $\sigma^{n\;\prime}\in K_-$. 
$\sigma^n\cap \sigma^{n\;\prime}=\sigma^{n-1}$ and 
$\sigma^{n\;\prime}=\vartheta \sigma^{n}$ hold.

To put this definition in analogy to the formulation of euclidean field theory as given in \cite{OS1,OS2}, 
intuitively 
$K_+$ is the {\it future}, $K_-$ is the {\it past} while $K_0=K_+\cap K_-$ is the {\it present}, 
so $\vartheta$ can be viewed as the (euclidean) time reflection. In view of this analogy we may say that 
although no time has been introduced (yet), at least a time direction has been singled out.
This observation will be elaborated on in Subsection \ref{subseq:timerev}. 

In Section \ref{sec:timezero} we shall 
return to this analogy.
\begin{example}\label{ex:1}
The pseudomanifold $K^n\;(n\ge 1)$ has two $n$-simplexes, denoted by $\sigma^n_+$ and $\sigma^n_-$. 
Their intersection is an $(n-1)-$simplex, denoted by $\sigma^{n-1}_0$. $K_\pm$ is the simplicial subcomplex 
consisting of $\sigma^{n-1}_\pm$ and its sub-simplexes. $K_0$ is the simplicial subcomplex 
consisting of $\sigma^{n-1}_0$ and its sub-simplexes such that $K_0=K_+\cap K_-$. Let $\sigma^0_\pm$ be the unique
vertex in $K_\pm$ not contained in $\sigma^{n-1}_0$. By definition $\vartheta$ is the automorphism of $K^n$, 
which is the identity on $K_0$ and which interchanges $\sigma^0_+$ and $\sigma^0_-$.
\end{example}
This example may be extended in the following way. Let $K^\prime$ be an $n-$dimensional pseudomanifold with non-empty
boundary $\partial K^\prime$. Let $K^\prime_0\subseteq \partial K^\prime$ be a nonempty subcomplex of 
$\partial K^\prime$. Let $K^{\prime\prime}$ be a copy of $K^\prime$ and denote 
by $K^{\prime\prime}_0$ the corresponding copy of $K^\prime_0$. Denote by $\iota$ the natural simplicial isomorphism
between $K^\prime$ and $K^{\prime\prime}$. Via $\iota$ we may identify the corresponding simplices in 
$K^\prime_0$ and $K^{\prime\prime}_0$. Call $K$ the resulting simplicial complex, a pseudomanifold of dimension $n$.
There are natural simplicial maps $\phi^\prime$ and $\phi^{\prime\prime}$ from $K^\prime$ and $K^{\prime\prime}$ 
respectively into $K$. Call the images $K_+$ and $K_-$. Moreover 
$\phi^\prime(K^\prime_0)=\phi^{\prime\prime}(K^{\prime\prime}_0)$, which we give the name $K_0$, a pseudomanifold of 
dimension $(n-1)$. Finally the reflection $\vartheta$ is given as follows. On $K_+$ the map $\phi^\prime$ is invertible.
Similarly on $K_-$ the map $\phi^{\prime\prime}$ is invertible. On $K_+$ the map $\vartheta$ equals 
$\phi^{\prime\prime}\circ\iota\circ \phi^{\prime^{-1}}$, while on $K_-$ the map $\vartheta$ equals 
$\phi^{\prime}\circ\iota^{-1}\circ \phi^{\prime\prime^{-1}}$. By construction $\vartheta(K_\pm)=K_{\mp}$.
It is easy to see $\vartheta=\vartheta^{-1}$ and that 
on $K_0=K_+\cap K_-$ the map $\vartheta$ thus defined is the identity map.

We omit the proof of the trivial 
\begin{lemma}\label{Hilbertinv}
 If $\vartheta$ is a reflection on $K^n$, then the relations $\cR(\vartheta\underline{z})=\cR(\underline{z}),\,
 V(\vartheta\underline{z})=V(\underline{z})$ and hence 
 $H(\vartheta\underline{z})=H(\underline{z})$ are valid.
 \end{lemma}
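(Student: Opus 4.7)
The plan is to reduce everything to the observation that a simplicial automorphism $\vartheta$ intertwines the local geometric data (edge lengths, Euclidean volumes, dihedral angles) in the only way it possibly can, and then to use the bijectivity of $\vartheta$ on the relevant sets of simplices to reindex the sums defining $V$ and $\cR$.

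First I would fix the convention that $\vartheta$ acts on metrics by pullback, i.e.\ $(\vartheta\underline{z})^{\sigma^1}=z^{\vartheta\sigma^1}$ for every $\sigma^1\in\Sigma^1(K^n)$. Since $\vartheta$ is a simplicial automorphism, it restricts to a bijection of the edge set $\Sigma^1(K^n)$, so this definition maps $\cM(K^n)$ to itself (the constraint \eqref{metricspace2} on a simplex $\sigma$ transforms into the same constraint on $\vartheta\sigma$). For any $k$-simplex $\sigma^k$, the data $\underline{z}(\sigma^k)$ and $(\vartheta\underline{z})(\vartheta^{-1}\sigma^k)$ consist of the same multiset of numbers indexed by corresponding edges, hence via \eqref{amatrix}--\eqref{amatrixvol} the Euclidean realizations $\sigma^k(\vartheta\underline{z})$ and $(\vartheta\sigma^k)(\underline{z})$ are isometric. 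In particular
\begin{equation*}
|\sigma^k|(\vartheta\underline{z}) = |\vartheta\sigma^k|(\underline{z}).
\end{equation*}

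For the volume this already suffices: reindexing the sum in \eqref{volume} by $\sigma^n\mapsto\vartheta\sigma^n$, which is a bijection of the set of $n$-simplices, yields $V(\vartheta\underline{z})=V(\underline{z})$. For the Regge curvature I would argue analogously. Dihedral angles depend only on the Euclidean realization of the pair $(\sigma^{n-2},\sigma^n)$, so the same isometry argument gives $(\sigma^{n-2},\sigma^n)(\vartheta\underline{z})=(\vartheta\sigma^{n-2},\vartheta\sigma^n)(\underline{z})$. Because $\vartheta$ is an automorphism of $K^n$, it preserves the face relation $\sigma^{n-2}\subset\sigma^n$, and it preserves the number of $n$-simplices containing a given $(n-1)$-simplex; consequently $\vartheta(\partial K^n)=\partial K^n$, so the decomposition into interior and boundary $(n-2)$-simplices used in \eqref{app:R} is respected. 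Reindexing both sums in \eqref{app:R} via $\sigma^{n-2}\mapsto\vartheta\sigma^{n-2}$ and $\sigma^n\mapsto\vartheta\sigma^n$ yields $\cR(\vartheta\underline{z})=\cR(\underline{z})$.

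Combining the two identities gives $H_{\gamma,\lambda}(\vartheta\underline{z})=H_{\gamma,\lambda}(\underline{z})$ from \eqref{Hilbert}. There is no real obstacle; the only point that requires a brief verification is that $\vartheta$ preserves $\partial K^n$, and this is immediate because "lies in exactly one $n$-simplex" is a purely combinatorial property invariant under any automorphism. Everything else is functoriality of the Cayley--Menger data under simplicial isomorphisms together with a change of summation index.
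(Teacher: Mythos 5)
Your proof is correct, and it fills in precisely the argument the paper has in mind: the paper states the lemma with the remark ``we omit the proof of the trivial'' lemma, and the intended reasoning is exactly your reindexing argument, namely that $(\vartheta\underline{z})(\sigma)=\underline{z}(\vartheta\sigma)$ up to the vertex bijection, hence $|\sigma^k|(\vartheta\underline{z})=|\vartheta\sigma^k|(\underline{z})$ and $(\sigma^{n-2},\sigma^n)(\vartheta\underline{z})=(\vartheta\sigma^{n-2},\vartheta\sigma^n)(\underline{z})$, followed by a change of summation index in \eqref{volume} and \eqref{app:R}. Your one nontrivial verification --- that $\vartheta(\partial K^n)=\partial K^n$ because membership in the boundary is a purely combinatorial property --- is exactly the point worth making explicit, so nothing is missing.
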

$\vartheta$ defines a smooth map from $\cM(K^n)$ into itself via 
$(\vartheta\underline{z})^{\sigma^1}=z^{\vartheta \sigma^{1}}$. For given $\underline{z}$ set 
\begin{equation}\label{zpm}
 \underline{z}_+=\pi_+\underline{z}=\left\{z^{\sigma^1}\right\}_{\sigma^1\in K_+}, 
 \quad \underline{z}_-=\pi_-\underline{z}=\left\{z^{\sigma^1}\right\}_{\sigma^1\in K_-}.
\end{equation}
such that $\vartheta \underline{z}_+=\underline{z}_-$.

\subsection{Reflection positivity}
In order to formulate reflection positivity 
we need to introduce cut-offs of $\cM(K^n)$. We will make the following choice
\begin{equation}\label{compact}
\fC_\kappa=\fC_\kappa(K^n)=\big\{\;\underline{z}\;|\: 
\det A(\underline{z}(\sigma))\ge 1/\kappa\;
\mbox{for all}\; \sigma\in K^n,\; \max(||\underline{z}_+||,||\underline{z}_-||)\;\le \kappa\big\}
\end{equation}
for any $\kappa$. By definition $\fC_\kappa\subset\fC_{\kappa^\prime} \subset\cM(K^n)$ for all 
$\kappa\le \kappa^\prime$ and each $\fC_\kappa$ 
is compact in $\cM(K^n)$. Their definition is motivated by the description \eqref{metricspace2} of 
$\cM(K^n)$.
It follows immediately from the definition that each $\fC_\kappa$ is reflection invariant. 
The $\fC_\kappa$ deserve to be called 
cut-offs, since they have the important property that they exhaust all of $\cM(K^n)$, that is 
$\cup_\kappa \fC_\kappa=\cM(K^n)$ holds.
Observe that each edge length is bounded and also bounded away from zero on each $\fC_\kappa$, since
$\sqrt{\kappa}\ge l^{\sigma^1}=\sqrt{z^{\sigma^1}}\ge 1/\sqrt{\kappa}$. So intuitively they provide both an 
infrared and an ultraviolet cut-off. In addition the volumes of each simplex are bounded from below
\begin{equation}
 |\sigma^k|(\underline{z})\ge \frac{1}{k!}\frac{1}{\sqrt{\kappa}},\quad \underline{z}\in \fC_\kappa.
\end{equation}
More importantly these cut-offs will not spoil reflection positivity, 
as we will see shortly. 

From now on, we will assume that the 1-simplexes of $K_+$ are ordered in some way, 
with the 1-simplexes in $K_0$ coming last. Via $\vartheta$ this induces an ordering on $K_-$. 
So we may view $\underline{z}_+$ as an element of $\R_+^{n_1(K_+)}$ and
$\pi_+:\underline{z}\:\mapsto \: \underline{z}_+$ as a map from $\fC_\kappa$ into $\R_+^{n_1(K_+)}$.
The set 
\begin{equation}\label{p+inv}
\fC_{+,\kappa}=\pi_+(\fC_\kappa),
\end{equation}
is easily seen to be compact. 
Let $C(\fC_{+,\kappa},\C)$ be the set of continuous, complex valued functions on $\fC_{+,\kappa}$. 
With the supremum norm
\begin{equation}\label{supnorm}
||f||_{\sup}=\sup_{\underline{z}\in \fC_{+,\kappa}}|f(\underline{z})|
\end{equation}
it is a commutative Banach algebra with unit, 
see e.g. \cite{Kaniuth} for definitions. The unit is just the constant function, equal to 1 
everywhere and will be denoted by $\e_\kappa$. Actually on continuous functions the essential supremum 
$||\cdot||_\infty$ and the supremum $||\cdot||_{\sup}$ agree. However, in order to avoid confusion, we will 
stick with the notation \eqref{supnorm}.

In the next section we shall exploit the fact that we are dealing 
with an algebra. In fact, we shall be able to construct what might be viewed as an analogue of 
a field algebra in quantum field 
theory. As for comparison with the continuum theory in gravity it corresponds to the family of functionals 
$\Phi(g)$ of the metric $g$.

We introduce the scalar product
\begin{equation}\label{scalarproduct}
 \langle f^\prime, f\rangle_\vee=\frac{1}{Z_{\fC_\kappa}(\gamma,\lambda)}
 \int_{\fC_\kappa}\overline{f^\prime((\vartheta\underline{z})_+)}f(\underline{z}_+)
 \e^{-H_{\gamma,\lambda}(\underline{z})}\rd \mu(\underline{z}).
 \end{equation}
 $\langle \e_\kappa, \e_\kappa\rangle_\vee=1$ is obvious. Observe that 
 $(\vartheta\underline{z})_+=\underline{z}_-$ holds.
 For given $K^n$ of course $\langle \cdot,\cdot\rangle_\vee$ 
 depends on the choice of $\vartheta,\gamma,\lambda$ and $\fC_\kappa$. This scalar product compares with the 
 $L^2$ scalar product \eqref{eucl1}.
\begin{lemma}\label{sesqui}
 $\langle\cdot,\cdot \rangle_\vee$ is a sesquilinear when the volume form is given by \eqref{volumeform}. 
 \end{lemma}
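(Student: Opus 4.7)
The proof is essentially a bookkeeping check: sesquilinearity is a direct consequence of the linearity of integration, once one verifies that the integrand is well defined and continuous so that the integral makes sense. The plan is to first establish well-definedness of the quantity $\langle f',f\rangle_\vee$ for each pair $f',f\in C(\fC_{+,\kappa},\C)$, and then to read off linearity in $f$ and conjugate-linearity in $f'$ from the formula.

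First I would observe that the projection $\pi_+:\fC_\kappa\to\fC_{+,\kappa}$ is a continuous (indeed affine) coordinate map, and that $\vartheta$ is a smooth map from $\cM(K^n)$ to itself which preserves $\fC_\kappa$ (the cut-off \eqref{compact} is manifestly invariant under $\vartheta$, since it is defined in terms of $\vartheta$-invariant data: the determinants $\det A(\underline{z}(\sigma))$ and the quantity $\max(\|\underline{z}_+\|,\|\underline{z}_-\|)$). Consequently $\underline{z}\mapsto f(\underline{z}_+)=f(\pi_+\underline{z})$ and $\underline{z}\mapsto f'((\vartheta\underline{z})_+)=f'(\pi_+(\vartheta\underline{z}))$ are continuous on $\fC_\kappa$ for every $f,f'\in C(\fC_{+,\kappa},\C)$. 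Combined with the continuity of $\e^{-H_{\gamma,\lambda}(\underline{z})}$ on $\fC_\kappa$ (by Theorem \ref{theo:smooth}) and the compactness of $\fC_\kappa$, the integrand is bounded and integrable with respect to $\rd\mu=\prod_{\sigma^1}\rd z^{\sigma^1}$. In particular $Z_{\fC_\kappa}(\gamma,\lambda)$ is a finite positive normalization constant, and the right-hand side of \eqref{scalarproduct} is a well-defined complex number.

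Now linearity in $f$ and conjugate-linearity in $f'$ follow at once. For $f_1,f_2\in C(\fC_{+,\kappa},\C)$ and $\alpha_1,\alpha_2\in\C$ the pointwise identity $(\alpha_1 f_1+\alpha_2 f_2)(\underline{z}_+)=\alpha_1 f_1(\underline{z}_+)+\alpha_2 f_2(\underline{z}_+)$ together with linearity of the Lebesgue integral gives
\begin{equation*}
\langle f',\alpha_1 f_1+\alpha_2 f_2\rangle_\vee=\alpha_1\langle f',f_1\rangle_\vee+\alpha_2\langle f',f_2\rangle_\vee.
\end{equation*}
Similarly, the complex conjugation in the first slot turns $(\alpha_1 f'_1+\alpha_2 f'_2)$ into $\bar\alpha_1 \overline{f'_1(\cdot)}+\bar\alpha_2\overline{f'_2(\cdot)}$, producing
\begin{equation*}
\langle \alpha_1 f'_1+\alpha_2 f'_2,f\rangle_\vee=\bar\alpha_1\langle f'_1,f\rangle_\vee+\bar\alpha_2\langle f'_2,f\rangle_\vee.
\end{equation*}

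There is really no obstacle here: the role of the specific volume form \eqref{volumeform} in the statement is only to fix a concrete finite positive Borel measure on $\fC_\kappa$ for which the above integrations are unambiguous; any other continuous positive density against $\prod\rd z^{\sigma^1}$ would yield sesquilinearity by the identical argument. The substance of the choice \eqref{volumeform} will only become visible in the next step, where positivity of the form — not linearity — is at stake.
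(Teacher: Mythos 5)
Your proposal establishes only the part of the claim that the paper's own proof dismisses in one sentence as ``clear'': well-definedness of \eqref{scalarproduct}, linearity in $f$, and conjugate-linearity in $f^\prime$. In this paper the lemma carries more content. The author's proof states explicitly that ``it remains to show $\overline{\langle f^\prime, f\rangle_\vee}=\langle f, f^\prime\rangle_\vee$'', i.e.\ Hermitian symmetry is part of what is being asserted, and it is the only non-trivial part --- and it is exactly where the hypothesis on the volume form enters. The paper's argument is a change of variables: substitute $\underline{z}^\prime=\vartheta\underline{z}$ in \eqref{scalarproduct}; since $\vartheta$ merely permutes the edge coordinates, the product measure \eqref{volumeform} satisfies $\rd\mu(\underline{z})=\rd\mu(\underline{z}^\prime)$ (the Jacobian is a permutation matrix); moreover $\vartheta\fC_\kappa=\fC_\kappa$ and $H_{\gamma,\lambda}(\vartheta\underline{z})=H_{\gamma,\lambda}(\underline{z})$ by Lemma \ref{Hilbertinv}. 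Using $(\vartheta\underline{z})_+=\underline{z}^\prime_+$ and $\underline{z}_+=(\vartheta\underline{z}^\prime)_+$ one obtains
\begin{equation*}
\overline{\langle f^\prime, f\rangle_\vee}
=\frac{1}{Z_{\fC_\kappa}(\gamma,\lambda)}\int_{\fC_\kappa}
\overline{f((\vartheta\underline{z}^\prime)_+)}\;f^\prime(\underline{z}^\prime_+)\,
\e^{-H_{\gamma,\lambda}(\underline{z}^\prime)}\,\rd\mu(\underline{z}^\prime)
=\langle f, f^\prime\rangle_\vee,
\end{equation*}
which is the step missing from your write-up.

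Because of this omission, your closing paragraph is not merely incomplete but incorrect as a reading of the lemma: it is false that ``any other continuous positive density against $\prod\rd z^{\sigma^1}$ would yield sesquilinearity by the identical argument.'' For a density $\rho$ that is not $\vartheta$-invariant (for instance one weighting an edge of $K_+\setminus K_0$ differently from its image under $\vartheta$), the (anti)linearity of course survives, but the symmetry $\overline{\langle f^\prime,f\rangle_\vee}=\langle f,f^\prime\rangle_\vee$ fails in general, since validity for all $f,f^\prime$ forces $\rho\circ\vartheta=\rho$ almost everywhere (the factor $\e^{-H_{\gamma,\lambda}}$ being already $\vartheta$-invariant). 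Under the strict textbook meaning of ``sesquilinear'' your argument is complete but trivial, and it leaves unexplained why the statement mentions \eqref{volumeform} at all; the paper's intended meaning, confirmed by its proof, includes Hermitian symmetry, and to repair your proposal you need to add the change-of-variables computation above, citing the $\vartheta$-invariance of $\rd\mu$, of $\fC_\kappa$, and of $H_{\gamma,\lambda}$.
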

 \begin{proof}
 Anti-linearity in the first factor and linearity in the second factor in 
 $\langle f^\prime, f\rangle_\vee$ is clear. 
 It remains 
 to show $\overline{\langle f^\prime, f\rangle_\vee}=\langle f, f^\prime\rangle_\vee$. To see this, make the 
 variable transformation $\underline{z}^\prime=\vartheta\underline{z}$ in the expression \eqref{scalarproduct}. 
 Then use $\rd \mu(\underline{z})
 =\rd\mu(\underline{z}^\prime)$, viewed as a volume form on $\fC_\kappa$, as well as 
 $\vartheta \fC_\kappa=\fC_\kappa$ and 
 Lemma \ref{Hilbertinv}. So the claim follows.
 \end{proof}
 We omit the trivial proof of the next lemma.
 \begin{lemma}\label{contin}
  The map $(f_1,f_2)\;\mapsto\;\langle f_1, f_2\rangle_\vee$ from $C(\fC_{+,\kappa},\C)\times C(\fC_{+,\kappa},\C)$  
  into $\C$ is continuous. More precisely the estimate
  \begin{equation}\label{contin2}
  |\langle f_1, f_2\rangle_\vee|\le ||f_1||_{\sup}\;||f_2||_{\sup}
  \end{equation}
  holds.
\end{lemma}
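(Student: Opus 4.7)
The plan is to prove the estimate \eqref{contin2} first by direct bounding of the integrand, and then deduce continuity from sesquilinearity (Lemma \ref{sesqui}) via a standard telescoping argument.

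First I would verify that the two arguments of $f_1$ and $f_2$ appearing in the integrand of \eqref{scalarproduct} actually lie in the domain $\fC_{+,\kappa}$ on which $\|\cdot\|_{\sup}$ is taken. By definition $\underline{z}\in \fC_\kappa$ gives $\underline{z}_+=\pi_+\underline{z}\in \fC_{+,\kappa}$. Since $\fC_\kappa$ is reflection invariant (noted just after \eqref{compact}), we also have $\vartheta\underline{z}\in \fC_\kappa$, so $(\vartheta\underline{z})_+=\pi_+(\vartheta\underline{z})\in \pi_+(\fC_\kappa)=\fC_{+,\kappa}$. Hence both $|f_1((\vartheta\underline{z})_+)|$ and $|f_2(\underline{z}_+)|$ are bounded pointwise by $\|f_1\|_{\sup}$ and $\|f_2\|_{\sup}$ respectively.

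Next I would pull these sup-norm bounds out of the integral, using that $e^{-H_{\gamma,\lambda}(\underline{z})}\,\rd\mu(\underline{z})$ is a positive measure on $\fC_\kappa$:
\begin{equation*}
|\langle f_1,f_2\rangle_\vee|
\le \frac{\|f_1\|_{\sup}\,\|f_2\|_{\sup}}{Z_{\fC_\kappa}(\gamma,\lambda)}
\int_{\fC_\kappa} e^{-H_{\gamma,\lambda}(\underline{z})}\,\rd\mu(\underline{z}).
\end{equation*}
The remaining integral is precisely $Z_{\fC_\kappa}(\gamma,\lambda)$ by the definition \eqref{partition}, and this is finite and strictly positive by the remark following \eqref{partition}. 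Cancellation yields \eqref{contin2}.

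Finally, for continuity of the map $(f_1,f_2)\mapsto \langle f_1,f_2\rangle_\vee$, I would use sesquilinearity together with the telescoping identity
\begin{equation*}
\langle f_1,f_2\rangle_\vee-\langle g_1,g_2\rangle_\vee
=\langle f_1-g_1,f_2\rangle_\vee+\langle g_1,f_2-g_2\rangle_\vee,
\end{equation*}
and then apply \eqref{contin2} to each term. Since $\|g_1\|_{\sup}$ is bounded on any neighborhood of a fixed $g_1$, this gives the usual joint continuity estimate in the product topology of $C(\fC_{+,\kappa},\C)\times C(\fC_{+,\kappa},\C)$. There is no real obstacle here: the only thing one must check carefully is the domain point (that $(\vartheta\underline{z})_+\in\fC_{+,\kappa}$), and this is immediate from $\vartheta$-invariance of $\fC_\kappa$.
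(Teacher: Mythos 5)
Your proof is correct and is precisely the routine argument the paper omits as trivial (``We omit the trivial proof of the next lemma''): bound the integrand pointwise by the sup norms, note the remaining integral is exactly $Z_{\fC_\kappa}(\gamma,\lambda)$ from \eqref{partition} so the normalization cancels, and deduce joint continuity from sesquilinearity via telescoping. Your careful check that $(\vartheta\underline{z})_+\in\fC_{+,\kappa}$ using reflection invariance of $\fC_\kappa$ is correct and consistent with the paper's observation that $(\vartheta\underline{z})_+=\underline{z}_-$.
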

As for reflection positivity the main result of this article is 
\begin{theorem}\label{theo:main}
Let $\vartheta$ be a reflection on the pseudomanifold $K^n$.  With the choice
\eqref{volumeform} of the volume form,  $\langle f, f\rangle_\vee\ge 0$ holds, that is 
$\langle\cdot,\cdot \rangle_\vee$ 
defines a scalar product, which possibly is degenerate.
\end{theorem}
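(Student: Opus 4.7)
The strategy is to factorize the integrand of $\langle f,f\rangle_\vee$ into mirror-image pieces, one depending only on $\underline{z}_+$ and one only on $\underline{z}_-$, so that integrating out the non-shared edge variables produces a manifest squared modulus. Using the $\vartheta$-identification of $\Sigma^1(K_-)$ with $\Sigma^1(K_+)$ and ordering edges with those in $K_0$ last, write $\underline{z}_+=(\underline{y},\underline{z}_0)$ and $\underline{z}_-=(\underline{w},\underline{z}_0)$, where $\underline{y}$, $\underline{w}$, $\underline{z}_0$ are the edge variables on $K_+\setminus K_0$, $K_-\setminus K_0$, and $K_0$ respectively.

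The central step is the additive decomposition
\begin{equation*}
H_{\gamma,\lambda}(\underline{z})=H_+(\underline{z}_+)+H_+(\underline{z}_-).
\end{equation*}
For the cosmological term this is immediate: the $n$-simplices of $K^n$ partition into those in $K_+$ and those in $K_-$ (since $K_0$ is of dimension $n-1$), and each $|\sigma^n|(\underline{z})$ depends only on the edges of $\sigma^n$. For the Regge term \eqref{app:R}, I split the $(n-2)$-simplices into those in $K_+\setminus K_0$, in $K_-\setminus K_0$, and in $K_0$. For a $\tau$ of the first two kinds, all $n$-simplices containing $\tau$ lie on the same side and the summand depends on only one of $\underline{z}_\pm$. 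For $\tau\in K_0$ one separates the deficit-angle prefactor symmetrically, $1=\tfrac12+\tfrac12$ (or $\tfrac12=\tfrac14+\tfrac14$ if $\tau\in\partial K^n$), and cuts the dihedral sum $\sum_{\sigma^n\supset\tau}(\tau,\sigma^n)$ along $\sigma^n\in K_+$ versus $\sigma^n\in K_-$. Because $|\tau|(\underline{z})$ depends only on edges of $\tau$ (all lying in $K_0$) and each dihedral angle $(\tau,\sigma^n)$ depends only on the edges of its single surrounding $n$-simplex, each of the two halves depends on one of $\underline{z}_\pm$ alone. Reflection invariance (Lemma~\ref{Hilbertinv}) then identifies the $K_-$-functional with the $K_+$-one under the $\vartheta$-identification.

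Next, $\rd\mu(\underline{z})=\rd\underline{y}\,\rd\underline{w}\,\rd\underline{z}_0$ splits trivially, and the cut-off indicator factors as $\chi_{\fC_\kappa}(\underline{z})=\chi_+(\underline{z}_+)\chi_+(\underline{z}_-)$: in the description \eqref{metricspace2} each constraint $\det A(\underline{z}(\sigma))\ge1/\kappa$ is local at a single simplex $\sigma$ whose edges lie wholly in $K_+$, wholly in $K_-$, or wholly in $K_0$ (shared $K_0$-constraints may be included in both factors since indicators are idempotent), and the norm cut-off $\max(||\underline{z}_+||,||\underline{z}_-||)\le\kappa$ is already a product. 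Substituting everything into \eqref{scalarproduct}, using that $H_+$ is real, and applying Fubini on the compact $\fC_\kappa$ yields
\begin{equation*}
\langle f,f\rangle_\vee=\frac{1}{Z_{\fC_\kappa}(\gamma,\lambda)}\int\rd\underline{z}_0\left|\int\rd\underline{y}\;f(\underline{y},\underline{z}_0)\,\chi_+(\underline{y},\underline{z}_0)\,\e^{-H_+(\underline{y},\underline{z}_0)}\right|^2\ge0.
\end{equation*}
The main obstacle is the decomposition of $\cR$ at the $(n-2)$-simplices lying in $K_0$; it is precisely the geometric locality of the deficit angle together with the matching locality of the cut-off constraints in \eqref{compact} -- in particular the separate bounds $||\underline{z}_+||,||\underline{z}_-||\le\kappa$ on the two halves -- that makes this clean $\underline{z}_+/\underline{z}_-$ split possible, and that is why naive cut-offs of $\cM(K^n)$ (bounding $||\underline{z}||$, or volumes of $n$-simplices rather than localized determinants) would destroy the argument.
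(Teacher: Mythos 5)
Your proof is correct and follows essentially the same route as the paper's Appendix~\ref{sec:app}: the additive split $H=H_++H_-$ with the $1=\tfrac12+\tfrac12$ division of the deficit-angle prefactor at $(n-2)$-simplices in $K_0$, the observation that the cut-off $\fC_\kappa$ factorizes over fixed $\underline{z}_0$ (the paper phrases this via the fibers $\widetilde{\fC}_{\kappa,\pm}(\underline{z}_0)=\widetilde{\pi}_\pm\pi_0^{-1}\{\underline{z}_0\}$ rather than via indicator functions), and Fubini to exhibit $\langle f,f\rangle_\vee$ as an integral over $\pi_0\fC_\kappa$ of a squared modulus. Your explicit $\tfrac12=\tfrac14+\tfrac14$ treatment of $(n-2)$-simplices in $K_0\cap\partial K^n$ is a small point of extra care that the paper's displayed decomposition \eqref{app:RVdecomp2} passes over silently, but it does not alter the argument.
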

The proof will be given in Appendix \ref{sec:app}.


\section{The quantum Hilbert space and some field operators and quantum observables.}\label{sec:Hilbertspace}
In this section $\fC_\kappa,\gamma$ and $\lambda$ will be fixed.

A Hilbert space, an basic ingredient in any model in quantum theory, can now be obtained as follows. We 
employ arguments used in \cite{OS1,OS2}:
We have to cope with the possibility that the scalar product may be degenerate, so we proceed as follows. 
Let the null space $\cN$ be the set of all $n\in C(\fC_{+,\kappa},\C)$ with $\langle n, n\rangle_\vee= 0$. 
$\cN$ may just be the null vector in $C(\fC_{+,\kappa},\C)$.

\begin{lemma}\label{nullspace}
$\cN$ is a closed linear space and $\eta\in\cN$ if and only if $\langle f,\eta\rangle_\vee=0$ for all 
$f\in C(\fC_{+,\kappa},\C)$. If $\eta$ is in $\cN$, then also its complex conjugate $\bar{\eta}$.
\end{lemma}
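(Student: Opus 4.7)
The proof will consist of three routine ingredients: Cauchy--Schwarz applied to the positive semidefinite form from Theorem \ref{theo:main}, the continuity estimate from Lemma \ref{contin}, and the reflection-invariance of $\rd\mu$ and $H_{\gamma,\lambda}$ (Lemma \ref{Hilbertinv}). Nothing is deep here; the point is to assemble them in the right order.

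First I would establish the characterization $\eta\in\cN \iff \langle f,\eta\rangle_\vee=0$ for all $f\in C(\fC_{+,\kappa},\C)$. By Lemma \ref{sesqui} and Theorem \ref{theo:main}, $\langle\cdot,\cdot\rangle_\vee$ is a positive semidefinite Hermitian sesquilinear form, so the standard Cauchy--Schwarz inequality
$$|\langle f,\eta\rangle_\vee|^2 \le \langle f,f\rangle_\vee\,\langle \eta,\eta\rangle_\vee$$
applies. If $\langle\eta,\eta\rangle_\vee=0$, the right-hand side vanishes for every $f$, giving one direction. The converse is immediate by taking $f=\eta$.

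From this characterization, linearity of $\cN$ is free: for $\eta_1,\eta_2\in\cN$, $\alpha,\beta\in\C$, and any $f$,
$$\langle f,\alpha\eta_1+\beta\eta_2\rangle_\vee = \alpha\langle f,\eta_1\rangle_\vee + \beta\langle f,\eta_2\rangle_\vee = 0,$$
so $\alpha\eta_1+\beta\eta_2\in\cN$. Closedness is equally quick using Lemma \ref{contin}: if $\eta_m\to\eta$ in $||\cdot||_{\sup}$ with $\eta_m\in\cN$, the estimate \eqref{contin2} gives $|\langle f,\eta-\eta_m\rangle_\vee|\le ||f||_{\sup}\,||\eta-\eta_m||_{\sup}\to 0$, so $\langle f,\eta\rangle_\vee=\lim_m\langle f,\eta_m\rangle_\vee=0$ for every $f$, and the characterization yields $\eta\in\cN$.

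The only step requiring a small computation is the invariance under complex conjugation. I would show directly that $\langle\bar\eta,\bar\eta\rangle_\vee=\overline{\langle\eta,\eta\rangle_\vee}$ (which is $0$ since $\langle\eta,\eta\rangle_\vee\ge 0$). In \eqref{scalarproduct} with $f'=f=\bar\eta$, the integrand is $\eta((\vartheta\underline{z})_+)\,\overline{\eta(\underline{z}_+)}\,\e^{-H_{\gamma,\lambda}(\underline{z})}$. Performing the change of variables $\underline{z}\mapsto\vartheta\underline{z}$, and using that $\vartheta\fC_\kappa=\fC_\kappa$, $\rd\mu(\vartheta\underline{z})=\rd\mu(\underline{z})$ (since $\vartheta$ just permutes the coordinates $z^{\sigma^1}$), together with Lemma \ref{Hilbertinv}, interchanges the roles of $\underline{z}_+$ and $\underline{z}_-=(\vartheta\underline{z})_+$ and yields precisely $\overline{\langle\eta,\eta\rangle_\vee}$. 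The expected mild obstacle is just being careful with conjugations here; once done, $\bar\eta\in\cN$ follows.
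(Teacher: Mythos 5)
Your proof is correct and follows essentially the same route as the paper's: Cauchy--Schwarz for the positive semidefinite Hermitian form (giving the characterization and linearity of $\cN$), the continuity estimate \eqref{contin2} for closedness, and the identity $\langle\bar{\eta},\bar{\eta}\rangle_\vee=\overline{\langle\eta,\eta\rangle_\vee}$ for conjugation invariance --- the paper merely states these steps without detail, and your change of variables via $\vartheta$ is exactly the computation behind the paper's cited identity (in fact it follows even more directly by conjugating the integrand in \eqref{scalarproduct}, since $H_{\gamma,\lambda}$, $\rd\mu$ and $Z_{\fC_\kappa}$ are real).
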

\begin{proof}
The proof trivially follows from the fact the fact Schwarz inequality also holds for the scalar product 
 $\langle\cdot,\cdot\rangle$. Closedness then follows from continuity, see Lemma \ref{contin}.
 The last claim follows trivially from the identity 
 $\langle\bar{\eta}, \bar{\eta}\rangle_\vee=\overline{\langle \eta,\eta \rangle_\vee}$.
 \end{proof}
 
The elements of the quotient space are just the cosets of $\cN$. Consider the map
\begin{align*}
\hspace{1.5cm}\psi \quad :\quad &C(\fC_{+,\kappa},\C) \quad \rightarrow\quad C(\fC_{+,\kappa},\C)
/\cN\\
&f\quad\qquad\qquad \mapsto\quad \psi(f)=f+\cN. 
\end{align*}
In the case that $\cN$ happen to be trivial, $\psi$ is chosen to be the identity map.

The map $\psi$ carries the positive semi-definite scalar product $\langle\cdot,\cdot \rangle_\vee$ into a positive 
definite scalar product on
$C(\fC_{+,\kappa},\C)/\cN$, denoted by $\langle\cdot, \cdot\rangle$. Thus we have
$$
\langle \psi(f^\prime), \psi(f)\rangle=\langle f^\prime, f\rangle_\vee.
$$
In particular the norm $||\cdot||$ obtained from the scalar product $\langle\cdot, \cdot\rangle$ has the property that
\begin{equation}\label{psicont}
 ||\psi(f)||=\left(\langle f, f\rangle_\vee\right)^{1/2}\le ||f||_{\sup}
\end{equation}
holds for all $f\in C(\fC_{+,\kappa})$. This gives
\begin{lemma}\label{lem:psicont}
 The map $\psi$ is continuous.
\end{lemma}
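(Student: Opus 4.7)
The proof is essentially immediate once one observes that the key inequality \eqref{psicont} has already been established in the line preceding the lemma, so the task reduces to reading off continuity from a norm bound on a linear map.

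My plan is as follows. First, I would note that $\psi$ is linear by construction: the quotient map $f \mapsto f + \cN$ is linear since $\cN$ is a linear subspace (by Lemma \ref{nullspace}). Therefore, by the standard characterization of continuous linear maps between normed spaces, it suffices to show that $\psi$ is bounded, i.e., that there exists a constant $C > 0$ such that $||\psi(f)|| \le C\, ||f||_{\sup}$ for every $f \in C(\fC_{+,\kappa},\C)$.

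Second, I would invoke the bound \eqref{psicont}, which states precisely that $||\psi(f)|| \le ||f||_{\sup}$, giving $C = 1$. This bound is itself a direct consequence of Lemma \ref{contin}: specializing \eqref{contin2} to $f_1 = f_2 = f$ yields
\begin{equation*}
||\psi(f)||^2 = \langle \psi(f), \psi(f)\rangle = \langle f, f\rangle_\vee \le ||f||_{\sup}^2,
\end{equation*}
and taking square roots gives the claim. (Here the first equality uses the compatibility $\langle \psi(f'), \psi(f)\rangle = \langle f', f\rangle_\vee$ built into the definition of the quotient norm.)

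Third, I would conclude continuity in the usual way: given any sequence $f_n \to f$ in $(C(\fC_{+,\kappa},\C), ||\cdot||_{\sup})$, the linearity of $\psi$ together with the bound gives $||\psi(f_n) - \psi(f)|| = ||\psi(f_n - f)|| \le ||f_n - f||_{\sup} \to 0$, so $\psi(f_n) \to \psi(f)$ in the quotient norm.

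There is no real obstacle here — the lemma is essentially a reformulation of the estimate \eqref{psicont}, which the author has already derived from Lemma \ref{contin}. The only point worth being careful about is that $\psi$ maps into a space whose norm is defined via the scalar product descending from $\langle\cdot,\cdot\rangle_\vee$, but this compatibility is exactly how the quotient norm is set up. Thus the lemma should be recorded as a one-line corollary of \eqref{psicont} and the linearity of $\psi$.
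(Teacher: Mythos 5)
Your proposal is correct and follows exactly the paper's route: the paper also derives Lemma \ref{lem:psicont} directly from the bound \eqref{psicont} (itself a consequence of Lemma \ref{contin} via $\langle f,f\rangle_\vee\le \|f\|_{\sup}^2$), with continuity then immediate from linearity of the quotient map. You have merely spelled out the standard bounded-linear-implies-continuous step that the paper leaves implicit in the phrase ``This gives.''
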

With this scalar product $\langle\cdot,\cdot \rangle$ the algebra $C(\fC_{+,\kappa},\C)/\cN$ is turned 
into a pre-Hilbert space $\cH^{pre}$ . 
Its (metric) completion is the desired Hilbert space $\cH$, which sometimes is called the {\it physical Hilbert space} 
in contrast to the Hilbert space $L^2(\fC_{+,\kappa},\rd \nu)$. By this construction $\cH^{pre}$ is automatically 
dense in $\cH$. $\cH$ depends on the parameters 
$\gamma,\lambda$, the cut-off $\fC_\kappa(K^n)$, the volume form $\rd \mu$ and the reflection $\vartheta$. Observe 
that in this construction the {\it ``future''} $K_+$ has been singled out, see the discussion after Definition 
\ref{ref:def1}. 

We set $\Omega_\kappa=\psi(\e_\kappa)\neq 0$ and call it {\it the vacuum}, or less flashy the {\it ground state}.
Since $\langle \e_\kappa,\e_\kappa\rangle_\vee=1$, $\Omega_\kappa$ is a unit vector. 
More generally the quantum gravity states are all the unit vectors in $\cH$, since we do not expect there are 
superselection rules \cite{StreaterWightman,WWW}.
In case the normalized states lie in the pre-Hilbert 
space, they are just functions of the metric. This is analogous to the continuum case, where one expects the states to be functions of 
the metric or rather on the associated moduli space, consisting of the set diffeomorphism classes of metrics. 
If $\cO$ is an observable and $\phi\in\cH$ is any state, then $\langle\phi,\cO\phi\rangle$ is called the 
expectation of $\cO$ in the state $\phi$. $\langle \Omega_\kappa, \cO \Omega_\kappa\rangle$ is called the 
vacuum expectation of $\cO$. 

The next aim is to construct a field algebra and interesting observables. Thus we want to define a 
{\it ``field operator'' $\Phi(f)$} 
acting on states $\psi(f^\prime)$ in the form 
\begin{equation}\label{desid}
\Phi(f)\psi(f^\prime)=\psi(f\,f^\prime) 
\end{equation}
with $f^\prime\in C(\fC_{+,\kappa},\C)$ and where $f$ is in some suitable function space $\cA$ such 
that the composition 
$f\,f^\prime$ is defined and is an element of $C(\fC_{+,\kappa},\C)$.
So the {\it desideratum} is 
$$
\Phi(f)\psi(f^\prime)=f\,f^\prime +f\cN\subseteq f\,f^\prime +\cN.
$$
For a corresponding discussion in euclidean field theory, 
see \cite{GlimmJaffe} p. 94.  

In order to define $\cA$, let $C_0(\cM(K^n),\C)$ be the algebra of all continuous functions on $\cM(K^n)$ with compact support. 
For given $K^n$ with reflection $\vartheta$ the algebra 
$C_0(\cM(K_0),\C)$, which is contained in $C_0(\cM(K^n),\C))$, is defined similarly. Note that all 
the algebras $C(\fC_{+,\kappa},\C)$ are $C_0(\cM(K^n),\C))$-modules and so {\it a fortiori} 
$C_0(\cM(K_0),\C)$-modules by defining the composition rule as
\begin{equation}\label{restr}
f\,f^\prime\doteq f|_{\fC_{+,\kappa}}f^\prime,
\end{equation}
where multiplication on the r.h.s. is in $C(\fC_{+,\kappa},\C)$.
As a consequence the algebras $C(\fC_{+,\kappa},\C)$ are not only modules but actually {\it ideals}.
So the following definition makes sense.
\begin{definition}\label{ass:1}
The algebra $\cA$ is defined to be $C_0(\cM(K^n),\C))$ if $\cN=\{0\}$ and 
$C_0(\cM(K_0),\C)$ if $\dim \cN>0$.
\end{definition}
We note that by Tietze's extension theorem the restriction map $f\:\rightarrow \:f|_{\fC_{+,\kappa}}$ is surjective 
from $C_0(\cM(K^n),\C))$ to $C(\fC_{+,\kappa},\C)$, see Appendix \ref{app:Tietze} for details.
\begin{proposition}\label{prop:module}
$\cN$ is an $\cA$-module. 
\end{proposition}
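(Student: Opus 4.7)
The plan is to split into the two cases from Definition \ref{ass:1}, the first being trivial and the second relying on the fact that $\vartheta$ is the identity on $K_0$.

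If $\cN=\{0\}$, then $\cA=C_0(\cM(K^n),\C)$ and the only element of $\cN$ to be tested is $0$, for which $f\cdot 0 = 0 \in \cN$ for any $f\in\cA$. So the content is in the case $\dim\cN>0$, where $\cA=C_0(\cM(K_0),\C)$. Fix $f\in\cA$ and $\eta\in\cN$. Writing $\tilde f$ for the function on $\fC_{+,\kappa}$ defined by the module action \eqref{restr} (so $\tilde f(\underline{z}_+)=f(\underline{z}_+\big|_{K_0})$), the product $\tilde f\eta$ lies in $C(\fC_{+,\kappa},\C)$, so the question is whether $\langle \tilde f\eta,\tilde f\eta\rangle_\vee=0$, or equivalently by the Schwarz argument used in Lemma \ref{nullspace}, whether $\langle g,\tilde f\eta\rangle_\vee=0$ for every $g\in C(\fC_{+,\kappa},\C)$.

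The crucial observation is the following reflection-symmetry of $\tilde f$. Because $\vartheta$ is the identity on $K_0$, for every $\underline{z}\in\fC_\kappa$ the restrictions of $\underline{z}_+$ and $(\vartheta\underline{z})_+=\underline{z}_-$ to the edges of $K_0$ coincide. Since $f$ depends only on those edges,
\begin{equation*}
\tilde f(\underline{z}_+)=\tilde f\bigl((\vartheta\underline{z})_+\bigr),\qquad \underline{z}\in\fC_\kappa.
\end{equation*}

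With this symmetry in hand, I would compute directly from the definition \eqref{scalarproduct}:
\begin{equation*}
\langle g,\tilde f\eta\rangle_\vee=\frac{1}{Z_{\fC_\kappa}(\gamma,\lambda)}\int_{\fC_\kappa}\overline{g\bigl((\vartheta\underline{z})_+\bigr)}\,\tilde f(\underline{z}_+)\,\eta(\underline{z}_+)\,\e^{-H_{\gamma,\lambda}(\underline{z})}\rd\mu(\underline{z}).
\end{equation*}
Replace $\tilde f(\underline{z}_+)$ by $\tilde f\bigl((\vartheta\underline{z})_+\bigr)=\overline{\overline{\tilde f\bigl((\vartheta\underline{z})_+\bigr)}}$, so that the combination $\overline{g\bigl((\vartheta\underline{z})_+\bigr)}\,\tilde f\bigl((\vartheta\underline{z})_+\bigr)$ equals $\overline{(\overline{\tilde f}\cdot g)\bigl((\vartheta\underline{z})_+\bigr)}$. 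This exhibits the integral as $\langle \overline{\tilde f}\cdot g,\eta\rangle_\vee$, and since $\overline{\tilde f}\cdot g\in C(\fC_{+,\kappa},\C)$ while $\eta\in\cN$, Lemma \ref{nullspace} forces this to vanish. Hence $\tilde f\eta\in\cN$, proving the module property.

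I do not expect any real obstacle: the only delicate point is bookkeeping around the module action \eqref{restr}, namely checking that $\tilde f(\underline z_+)=\tilde f((\vartheta\underline z)_+)$ actually follows from the $K_0$-dependence of $f$ and from $\vartheta\big|_{K_0}=\mathrm{id}$. Once that identity is in place, everything is an application of Lemma \ref{nullspace} and a relabelling inside the integral.
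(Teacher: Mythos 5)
Your proof is correct, but it takes a genuinely different route from the paper's. The paper proves directly that $\langle f\eta,f\eta\rangle_\vee=0$: writing $\underline{z}=(\widetilde{\underline{z}}_+,\underline{z}_0,\widetilde{\underline{z}}_-)$ and using the factorization $\e^{-H_{\gamma,\lambda}}=\e^{-H_{-,\gamma,\lambda}}\,\e^{-H_{+,\gamma,\lambda}}$, it splits the integral as in the proof of Theorem \ref{theo:main} into $\frac{1}{Z_{\fC_\kappa}(\gamma,\lambda)}\int_{\pi_0\fC_\kappa}|F(\underline{z}_0)|^2\,|f(\pi_0\underline{z})|^2\prod_{\sigma^1\in K_0}\rd z^{\sigma^1}$, where $F(\underline{z}_0)$ is the partial integral of $\eta\,\e^{-H_{+,\gamma,\lambda}}$ over the edge variables in $K_+\setminus K_0$; the hypothesis $\eta\in\cN$ forces the continuous function $F$ to vanish identically, and the weighted integral vanishes with it. You avoid this factorization altogether: from $\vartheta|_{K_0}=\mathrm{id}$ you derive the symmetry $\tilde f(\underline{z}_+)=\tilde f((\vartheta\underline{z})_+)$, transfer the multiplier into the reflected slot, and conclude $\langle g,\tilde f\eta\rangle_\vee=\langle \overline{\tilde f}\,g,\eta\rangle_\vee=0$ using both directions of Lemma \ref{nullspace} (note the same symmetry appears implicitly in the paper when the two $f$-factors combine into $|f(\pi_0\underline{z})|^2$, but there it feeds into the Fubini computation rather than replacing it). What each approach buys: yours is shorter and stays entirely at the level of the sesquilinear form, needing reflection positivity only through the Cauchy--Schwarz inequality underlying Lemma \ref{nullspace}, and it sidesteps both the integration over the fibers $\widetilde{\fC}_{\kappa,\pm}(\underline{z}_0)$ and the step from almost-everywhere to everywhere vanishing of $F$; the paper's computation, in exchange, yields the explicit representation through $F$, which shows concretely what membership in $\cN$ means (the partial integral over the future variables vanishes identically) and is the same mechanism reused in the proof of Proposition \ref{vee0}.
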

The proof will be given in Appendix \ref{2prop}.
Due to this result, $\Phi(f)$ with $f\in\cA$ is indeed an operator on $\cH$ with dense domain equal to 
$\cH^{pre}$, leaving this domain invariant and 
\begin{equation}\label{fieldop}
\Phi(f)\psi(f^\prime)=\psi(f\,f^\prime)
\end{equation}
holds. It gives the bound 
$||\Phi(f)\psi(f^\prime)||\le ||f||_{\sup}\,||f^\prime||_{\sup}$. 
This does \underline{not} imply that $\Phi(f)$ can 
be extended to a 
bounded operator on all of $\cH$. Nevertheless the relations 
$\Phi(\lambda_1 f_1+\lambda_2 f_2)=\lambda_1\Phi(f_1)+\lambda_2\Phi(f_2)$ and 
$\Phi(f_1\,f_2)=\Phi(f_1)\Phi(f_2)$ hold
on $\cH^{pre}$. From \eqref{fieldop} and some work using again the Tietze extension theorem, we conclude
\begin{proposition}\label{prop:cycl}
If $\cN=\{0\}$ holds, then the vacuum $\Omega_\kappa$ is a cyclic vector in $\cH$ for 
the set of operators $\Phi(f),\,f\in \cA$.
\end{proposition}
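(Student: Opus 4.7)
The plan is to show that the set $\{\Phi(f)\Omega_\kappa : f\in \cA\}$ already equals $\cH^{pre}$, which is dense in $\cH$ by construction, so cyclicity follows immediately.

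First I would unpack the action on the vacuum. Since $\cN=\{0\}$, the map $\psi$ is the identity, $\cA=C_0(\cM(K^n),\C)$, and $\Omega_\kappa=\e_\kappa$ is the constant function $1$ on $\fC_{+,\kappa}$. Using the defining property \eqref{fieldop} together with the composition rule \eqref{restr}, one has
\begin{equation*}
\Phi(f)\Omega_\kappa=\psi(f\cdot\e_\kappa)=\psi\bigl(f|_{\fC_{+,\kappa}}\bigr),\qquad f\in\cA.
\end{equation*}
Thus the orbit of $\Omega_\kappa$ under the field operators is exactly $\psi(R(\cA))$, where $R:\cA\to C(\fC_{+,\kappa},\C)$ is the restriction map $f\mapsto f|_{\fC_{+,\kappa}}$.

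Second, I would invoke the Tietze extension theorem, as the author does in Appendix \ref{app:Tietze}. The set $\fC_{+,\kappa}=\pi_+(\fC_\kappa)$ is compact in the locally compact Hausdorff space $\cM(K^n)\subseteq\R_+^{n_1(K^n)}$; hence every continuous $\C$-valued function on $\fC_{+,\kappa}$ extends to a continuous function on $\cM(K^n)$, and by multiplying with a compactly supported cutoff that equals $1$ on $\fC_{+,\kappa}$ (whose existence follows from Urysohn's lemma on the locally compact Hausdorff space $\cM(K^n)$) we obtain an extension lying in $C_0(\cM(K^n),\C)=\cA$. This shows $R(\cA)=C(\fC_{+,\kappa},\C)$.

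Combining the two steps gives
\begin{equation*}
\{\Phi(f)\Omega_\kappa : f\in\cA\}=\psi\bigl(C(\fC_{+,\kappa},\C)\bigr)=\cH^{pre},
\end{equation*}
which is dense in $\cH$ by the definition of the completion. Therefore the linear span of $\{\Phi(f)\Omega_\kappa\}_{f\in\cA}$ is dense in $\cH$, i.e.\ $\Omega_\kappa$ is cyclic for the family of field operators. The only delicate point is the Tietze/Urysohn step, namely ensuring that one can extend with \emph{compact} support (so that the extension lands in $\cA$ rather than in the larger algebra of bounded continuous functions); this is why the proposition requires $\cA=C_0(\cM(K^n),\C)$ and is handled by the appendix. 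Everything else is a formal rewriting using \eqref{fieldop} and \eqref{restr}.
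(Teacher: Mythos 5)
Your proof is correct and follows essentially the same route as the paper's (which is only sketched there): combine \eqref{fieldop} with the Tietze-extension-plus-cutoff argument of Appendix \ref{app:Tietze} to get surjectivity of the restriction map from $\cA=C_0(\cM(K^n),\C)$ onto $C(\fC_{+,\kappa},\C)$, whence the orbit $\{\Phi(f)\Omega_\kappa\}$ is all of $\cH^{pre}$, which is dense in $\cH$ by construction. One small imprecision, shared with the paper's own phrasing: $\fC_{+,\kappa}=\pi_+(\fC_\kappa)$ is a compact subset of $\cM(K_+)\subseteq\R_+^{n_1(K_+)}$, not of $\cM(K^n)$, so the Tietze/Urysohn extension should be performed in $\R^{n_1(K_+)}$ (exactly as in Appendix \ref{app:Tietze}) and the result then regarded as an element of $\cA$ via the projection $\pi_+$, consistent with the composition rule \eqref{restr}.
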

By the arguments used in the proof of Lemma \ref{sesqui}, it
is easy to see that
\begin{equation}
 \overline{\langle \psi(f^{\prime\prime}), \Phi(f)\psi(f^\prime)\rangle}
 =\langle \Phi(\bar{f})\psi(f^\prime),
\psi(f^{\prime\prime})\rangle. 
\end{equation}
So when $f$ is real and $f^{\prime\prime}=f^\prime$
we infer that 
$$
\langle \psi(f^\prime),\Phi(f)\psi(f^\prime)\rangle 
$$ 
is real. 

Here are some additional observables and we observe that by construction the {\it `` configuration space''} 
underlying $\cH$ is actually $K_+$.
The first important observables are obtained from $R(K_+,\underline{z}_+)$ and $V(K_+,\underline{z_+})$. For their 
relation to 
$R(K^n,\underline{z})$ and $R(K^n,\underline{z})$ see the appendix. 
These quantities are smooth functions of $\underline{z}_+$ on all $\cM(K_+)$. Via 
$\underline{z}_+=\pi_+\underline{z}$ we will view them as functions of 
$\underline{z}\in\cM(K^n)$. By restriction to $\fC_{+,\kappa}$ they define elements of $C(\fC_{+,\kappa},\C)$, denoted 
by the same symbol. Hence $\Phi(R(K_+,\underline{z}_+))$ and $\Phi(V(K_+,\underline{z}_+))$ are well defined as 
are their expectation values in states in $\cH^{pre}$. In particular one can consider the vacuum expectation values 
\begin{equation}\label{vacuum}
 \langle \Omega_\kappa, \Phi(R(K_+))\Omega_\kappa\rangle,\qquad 
\langle \Omega_\kappa, \Phi(V(K_+))\Omega_\kappa\rangle.
\end{equation}
As to be expected of a quantum theory, these (finite) expectations in the  vacuum do not vanish, 
In other words there are quantum fluctuations of both the curvature and the volume in the vacuum. 

In \cite{Schrader} we introduced and discussed the vector field given as the gradient of the Regge curvature. 
So the components of the vector field 
$\underline{\nabla}_{\underline{z}_+}R(K_+,\underline{z}_+)$ also define 
observables and expectations like 
$$
\langle \Omega_\kappa, \Phi(\underline{\nabla}_{\underline{z}_+}R(K_+))\Omega_\kappa\rangle.
$$
By the discussion in \cite{Schrader} the vector field 
$\underline{\nabla}_{\underline{z}_+}R(K_+,\underline{z}_+))$ can only become singular, when $\underline{z}_+$ tends 
to the boundary of $\cM(K_+)$ in such a way, that the volume of one (or more) of the $(n-2)-$simplexes tends to zero. 
Analogously the gradient of the volume can become singular, when the volume of one (or more) of the $n-$simplexes 
becomes small. This is another reason, why we introduced cut-offs in the form of the $\fC_\kappa(K^n)$.  

This discussion allows to consider {\it perturbations} in the following way. Assume for simplicity that $\cN=\{0\}$ 
such that $\cA=C_0(\cM(K^n),\C))$. Let $\e\in\cA$ be the function equal to 1 on $\cM(K^n)$. Consider $\Phi(\e+\delta f)$, 
where $\delta f$ is small. Thus $\Phi(\e+\delta f)\psi(f)=\psi(f)+\psi(\delta f|_{\fC_{\kappa}} f)$. 
For the special case when $f=\e_{\kappa}$ this describes perturbations of the vacuum $\Omega_\kappa$
$$
\Phi(\e+\delta f)\Omega_{\kappa}=\Omega_\kappa+ \psi(\delta f|_{\fC_{\kappa}}\e_\kappa).
$$

We conclude this section by providing an analogue of the functional derivative
$$
\frac{\delta}{\delta g_{ij}(x)}
$$
in the continuum case and which has an interpretation as an infinite dimensional gradient. 
As a first example we apply this gradient to the total scalar 
curvature on $(M^ng)$.
$$
R(g)=\int R(g)(x)\sqrt{\det g_{ij}(x)}\,\rd vol(g)(x),
$$
where
$$
\rd vol(g)(x)=\sqrt{\det g_{ij}(x)}\;\rd x^1\wedge\rd x^2\cdots\wedge\rd x^n,
$$
is the volume form. Its functional derivative is (minus) the Einstein tensor
$$
\frac{\delta R(g)}{\delta g_{ij}(x)}=-\left(
 Ric(g)^{ij}(x)-\frac{R(g)(x)}{2}g^{ij}(x)\right).
$$
On the level of quadratic forms we can actually do more. Formally 
\begin{align*}
\langle \psi(g_1), R(K_-)\psi(g_2)\rangle &=\frac{1}{Z_{\fC_\kappa}(\gamma,\lambda)}
 \int_{\fC_\kappa}  \overline{g}_1(\underline{z}_-)
 R(K_-,\underline{z}_-)
 g_2(\underline{z}_+)
 \e^{-H_{\gamma,\lambda}(\underline{z})}\rd \mu(\underline{z})\\\nonumber
 &=\frac{1}{Z_{\fC_\kappa}(\gamma,\lambda)}\int_{\fC_\kappa} R(K_-,\underline{z}_-)
 \overline{g}_1(\underline{z}_-)
 \e^{-H_{\gamma,\lambda}(\underline{z})}\;g_2(\underline{z}_+) \;\rd \mu(\underline{z})\\\nonumber
 &=\langle \psi(\overline{g}_2),\Phi(R(K_+))\psi(\overline{g}_1)\rangle\\\nonumber
 &=\overline{\langle \psi(g_2),\Phi(R(K_+))\psi(g_1)\rangle}\\\nonumber
 &=\langle \Phi(R(K_+))\psi(g_1),\psi(g_2)\rangle,
\end{align*}
where for the third equality we have used arguments similar to the ones uses for the proof of Lemma \ref{sesqui}.
In particular this gives 
\begin{equation}\label{expR}
\langle \psi(g),\Phi(R(K^n))\psi(g)\rangle
=2\Re e(\langle \psi(g),\Phi(R(K_+))\psi(g)\rangle).
\end{equation}
Moreover 
\begin{equation}\label{expR1}
\langle \psi(g), \Phi(R(K^n))\psi(g)\rangle= 2\langle \psi(g),\Phi(R(K_+))\psi(g)\rangle
\end{equation}
holds whenever $g$ is real, $\overline{g}=g$.

Quantities like $V(K^n)$ and $\Phi(\underline{\nabla}_{\underline{z}_+}R(K))$ can be discussed similarly.

As a second example consider the special functional
\begin{equation}\label{psig}
 \Psi(g)=\int_{M^n}\Psi^{ij}(x)g_{ij}(x)\,\rd vol(g)(x). 
\end{equation}
$\Psi^{ij}(x)$ is a symmetric tensor field on $M^n$.
Its functional derivative is
$$
\frac{\delta \Psi(g)}{\delta g_{ij}(x)}=\Psi^{ij}(x)+\frac{1}{2}\Psi^{kl}(x)g_{kl}(x)g^{ij}(x).
$$
In order to formulate its p.l. version, let $C^1(\fC_{+,\kappa},\C)$ be the subalgebra consisting of functions, 
whose first order partial derivatives are all in $C(\fC_{+,\kappa},\C)$. 
\begin{lemma}\label{lem:Tietze}
$C^1(\fC_{+,\kappa},\C)$ is dense in $C(\fC_{+,\kappa},\C)$
\end{lemma}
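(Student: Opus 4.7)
The plan is to deduce the density from the Stone--Weierstrass theorem, with polynomials playing the role of the $C^1$ approximants. First I will use that $\fC_{+,\kappa}\subseteq \R_+^{n_1(K_+)}$ is a compact subset of a Euclidean space, as noted immediately after \eqref{p+inv}. Let $\cP$ denote the algebra of complex-coefficient polynomial functions in the real coordinates $z^{\sigma^1}$, $\sigma^1\in K_+$, restricted to $\fC_{+,\kappa}$. Since each such polynomial is smooth on all of $\R^{n_1(K_+)}$, its first-order partial derivatives are again polynomials and hence lie in $C(\fC_{+,\kappa},\C)$; therefore $\cP\subseteq C^1(\fC_{+,\kappa},\C)$.

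Next I will verify the hypotheses of the complex Stone--Weierstrass theorem for $\cP$ inside the commutative Banach algebra $C(\fC_{+,\kappa},\C)$: namely, $\cP$ contains the constant function $\e_\kappa$; the coordinate maps $\underline{z}\mapsto z^{\sigma^1}$ separate the points of $\fC_{+,\kappa}$; and because the coordinates are real-valued, $\cP$ is closed under complex conjugation, since conjugation simply conjugates the coefficients of each polynomial. Stone--Weierstrass then yields that $\cP$ is dense in $C(\fC_{+,\kappa},\C)$ with respect to $\|\cdot\|_{\sup}$. From the chain of inclusions $\cP\subseteq C^1(\fC_{+,\kappa},\C)\subseteq C(\fC_{+,\kappa},\C)$, density of the larger subalgebra follows at once.

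No genuine obstacle is anticipated; the only point deserving care is the confirmation that $\fC_{+,\kappa}$ is indeed a compact subset of $\R^{n_1(K_+)}$, which is already established from \eqref{compact} together with the continuity of the projection $\pi_+$. For completeness I note an alternative route via mollification: extend $f\in C(\fC_{+,\kappa},\C)$ to a compactly supported continuous function $\tilde f$ on $\R^{n_1(K_+)}$ by Tietze's theorem (cf.~Appendix \ref{app:Tietze}), convolve with a smooth compactly supported mollifier $\rho_\varepsilon$ to obtain $\tilde f\ast \rho_\varepsilon\in C^\infty_0\subseteq C^1$, and exploit the uniform continuity of $\tilde f$ to obtain uniform convergence as $\varepsilon\to 0$; restriction to $\fC_{+,\kappa}$ then gives the desired approximation. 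The Stone--Weierstrass route is however shorter and will be preferred.
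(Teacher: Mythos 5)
Your proof is correct, but it follows a genuinely different route from the paper's. The paper argues by extension and smoothing: it invokes Tietze's theorem to extend $f\in C(\fC_{+,\kappa},\C)$ to a compactly supported continuous $F$ on $\R^{\bar{n}}$, $\bar{n}=n_1(K_+)$, and then approximates $F$ uniformly by the Friedrichs mollification $F_\varepsilon=F\star g_\varepsilon$, restricting back to $\fC_{+,\kappa}$ at the end. Your Stone--Weierstrass argument is sound as stated: $\fC_{+,\kappa}$ is compact (as the paper records after \eqref{p+inv}), the polynomial algebra $\cP$ contains the unit $\e_\kappa$, the real coordinate functions $\underline{z}\mapsto z^{\sigma^1}$ separate the points of any subset of $\R^{\bar{n}}$, and $\cP$ is self-conjugate; moreover restrictions of polynomials clearly lie in $C^1(\fC_{+,\kappa},\C)$ in the paper's sense, so density of $\cP$ gives density of the larger subalgebra. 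What each approach buys: yours is shorter and avoids all extension and convolution machinery; the paper's mollification is constructive, produces $C^\infty$ approximants with explicit support control, and serves as a template for quantitative estimates. One point worth noting in your favor: in your sketched alternative you correctly invoke the \emph{uniform continuity} of the Tietze extension to get uniform convergence of the mollifications, whereas the paper's own written estimate $||F-F_\varepsilon||_{\sup}\le \varepsilon\,||\underline{\nabla}F||_{\sup}$ tacitly presupposes that the extension $F$ is differentiable with bounded gradient --- which a merely continuous extension need not be; the uniform-continuity bound $||F-F_\varepsilon||_{\sup}\le \omega_F(\varepsilon)$, with $\omega_F$ the modulus of continuity of $F$, is the correct repair of that step, so your remark is not redundant but actually tightens the paper's argument.
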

By this lemma and by Lemma \ref{lem:psicont} the set $\psi(C^1(\fC_{+,\kappa},\C))$ is dense in $\cH$. 
The proof uses standard and well known techniques. For completeness and for the convenience of the reader, 
however, we will nevertheless present a detailed proof in Appendix \ref{app:Tietze}.
\begin{definition}
 The set of operators $\underline{\nabla}^{op}$ on $\cH$ with dense domain 
 $\psi(C^1(\fC_{+,\kappa},\C)$ is defined by 
 $$
 \underline{\nabla}^{op}\psi(f)=\psi(\underline{\nabla}_{\underline{z}_+}f),\qquad f\in C^1(\fC_{+,\kappa},\C). 
 $$
\end{definition}
An easy calculation using the Leibniz rule gives the 
\begin{proposition}
The operator commutation relations
$$
\left[\,\underline{\nabla}^{op}, \Phi(f)\,\right] =\Phi(\underline{\nabla}_{\underline{z}_+}f), \qquad 
f\in C^1(\fC_{+,\kappa},\C) 
$$
are valid on the domain $\psi(C^1(\fC_{+,\kappa},\C))$, a dense set in $\cH$.
\end{proposition}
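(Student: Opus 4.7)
The plan is to prove the identity by direct calculation on the dense domain $\psi(C^1(\fC_{+,\kappa},\C))$, invoking the ordinary Leibniz rule for partial derivatives. The result really amounts to the observation that the ``field operator'' $\Phi(f)$ acts as multiplication by $f$ at the level of representatives, while $\underline{\nabla}^{op}$ is differentiation with respect to the $z^{\sigma^1}$-coordinates for $\sigma^1\in K_+$, so their commutator reduces to the usual derivation property.

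First I would fix $f\in C^1(\fC_{+,\kappa},\C)$ and a test vector $\psi(g)$ with $g\in C^1(\fC_{+,\kappa},\C)$, and verify that each composition preserves the domain. Since $C^1(\fC_{+,\kappa},\C)$ is closed under pointwise multiplication (the product of two continuously differentiable functions is continuously differentiable), the product $fg$ lies again in $C^1(\fC_{+,\kappa},\C)$, so by \eqref{fieldop} and the definition of $\underline{\nabla}^{op}$,
\begin{equation*}
\underline{\nabla}^{op}\Phi(f)\psi(g)=\underline{\nabla}^{op}\psi(fg)=\psi\bigl(\underline{\nabla}_{\underline{z}_+}(fg)\bigr).
\end{equation*}
In the reverse order, $\underline{\nabla}^{op}\psi(g)=\psi(\underline{\nabla}_{\underline{z}_+}g)$ has components in $C(\fC_{+,\kappa},\C)$, and applying $\Phi(f)$ component-wise yields $\Phi(f)\underline{\nabla}^{op}\psi(g)=\psi\bigl(f\,\underline{\nabla}_{\underline{z}_+}g\bigr)$.

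Next I would apply the classical Leibniz rule coordinate-by-coordinate, i.e.\ $\partial_{z^{\sigma^1}}(fg)=(\partial_{z^{\sigma^1}}f)\,g+f\,\partial_{z^{\sigma^1}}g$ for every $\sigma^1\in\Sigma^1(K_+)$, and subtract:
\begin{equation*}
[\,\underline{\nabla}^{op},\Phi(f)\,]\psi(g)=\psi\bigl(\underline{\nabla}_{\underline{z}_+}(fg)-f\,\underline{\nabla}_{\underline{z}_+}g\bigr)=\psi\bigl((\underline{\nabla}_{\underline{z}_+}f)\,g\bigr)=\Phi(\underline{\nabla}_{\underline{z}_+}f)\psi(g),
\end{equation*}
where the last equality is \eqref{fieldop} applied component-wise with $\underline{\nabla}_{\underline{z}_+}f$ playing the role of the multiplier. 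By Lemma~\ref{lem:Tietze} the domain $\psi(C^1(\fC_{+,\kappa},\C))$ is dense in $\cH$, so the commutation relation is established on a dense set, which is all that is claimed.

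There is no real conceptual obstacle; the only point requiring attention is bookkeeping about where $\Phi(\partial_{z^{\sigma^1}}f)$ is defined. Strictly speaking $\Phi$ was introduced in Definition~\ref{ass:1} for elements of $\cA$, so to apply $\Phi$ to the components of $\underline{\nabla}_{\underline{z}_+}f\in C(\fC_{+,\kappa},\C)$ one uses the same convention as for $\Phi(R(K_+))$ in \eqref{vacuum}: restriction of any compactly supported $C^0$-extension of $\partial_{z^{\sigma^1}}f$ to $\fC_{+,\kappa}$ (Tietze, as in Appendix~\ref{app:Tietze}) produces a well-defined operator on $\cH^{pre}$, independent of the extension since only the restriction to $\fC_{+,\kappa}$ enters the multiplication \eqref{restr}. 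Modulo this routine check, the proof is a one-line application of Leibniz.
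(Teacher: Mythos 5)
Your proposal is correct and matches the paper exactly: the paper offers no written proof beyond the remark that ``an easy calculation using the Leibniz rule'' gives the result, and your calculation is precisely that easy Leibniz-rule computation, with domain preservation under $\underline{\nabla}^{op}$ and $\Phi$ checked via closure of $C^1(\fC_{+,\kappa},\C)$ under products and density supplied by Lemma \ref{lem:Tietze}. Your closing remark on extending the convention of Definition \ref{ass:1} to multipliers in $C(\fC_{+,\kappa},\C)$ (as the paper itself already does tacitly for $\Phi(R(K_+))$) is a legitimate bookkeeping point that the paper glosses over, not a deviation in method.
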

The following example compares with $\Psi(g)$ in the smooth case, see \eqref{psig}.
\begin{example}\label{ex:psig}
 Let $f(\underline{z}_+)=\langle \underline{a},\underline{z}_+\rangle$ with arbitrary 
 $\underline{a}\in\R^{n_1(K_+)}$ . Then $\underline{\nabla}_{\underline{z}_+}f$ is the constant 
 vector field $\underline{a}$ and $\Phi(\underline{\nabla}_{\underline{z}_+}f)$ is multiplication by 
 $\underline{a}$ on $\cH$. More generally for $f=\langle \underline{a},\underline{z}_+\rangle^p,\;p\in \N$,
 $\Phi(\underline{\nabla}_{\underline{z}_+}f)$ is multiplication by 
 $p\langle \underline{a},\underline{z}_+\rangle^{p-1}\underline{a}$.
\end{example}

\vspace{0.3cm}

\subsection{Time reversal}\label{subseq:timerev}

As promised we now elaborate on the time reflection.
Introduce the scalar product 
$$
\langle g^\prime,g\rangle_{\vee,-}=\frac{1}{Z_{\fC_\kappa}(\gamma,\lambda)}
 \int_{\fC_\kappa}\overline{g^\prime((\vartheta\underline{z})_-)}g(\underline{z}_-)
 \e^{-H_{\gamma,\lambda}(\underline{z})}\rd \mu(\underline{z}).
$$
on $C(\fC_{-,\kappa},\C)\times C(\fC_{-,\kappa},\C)$. Define the following antilinear map $\Theta$ from
$C(\fC_{+,\kappa},\C)$ onto $C(\fC_{-,\kappa},\C)$ by
$$
(\Theta f)(\underline{z}_-)=\overline{f((\vartheta\underline{z})_+)}
$$
with inverse 
$$
 (\Theta^{-1} g)(\underline{z}_+)=\overline{g((\vartheta\underline{z})_-)}.    
$$
A direct comparison with \eqref{scalarproduct} gives 
\begin{equation}\label{scprtheta}
\langle \Theta f^\prime, \Theta f\rangle_{\vee,-}=\langle f, f^\prime\rangle_{\vee}. 
\end{equation}
In analogy to the definition of $\cN$ we set
$$
\cN_-=\left\{g\;|\; \langle g, g\rangle_{\vee,-}=0\right\}.
$$
Lemma \ref{nullspace} for $\cN$ carries over to $\cN_-$.
 On the coset space 
$C(\fC_{-,\kappa},\C)/\cN_-$ we obtain a positive definite sesquilinear form making this space a pre-Hilbert 
space $\cH^{pre}_-$. We denote by $||\cdot||_-$ the Hi;bert space norm, that is 
$||g||_-=(\langle g,g\rangle)^{1/2}$. Consider the map 
\begin{align*}
\hspace{1.5cm}\psi_- \quad :\quad &C(\fC_{-,\kappa},\C) \quad \rightarrow\quad C(\fC_{-,\kappa},\C)
/\cN_-\\
&g\quad\qquad\qquad \mapsto\quad \psi_-(g)=g+\cN_-. 
\end{align*}
Since $\overline{\cN}=\cN$ it follows easily that $\Theta \cN=\cN_-=\overline{\cN}_-$. 
Therefore it makes sense to set
$$
\Theta\psi(f)=\psi_-(\Theta f)
$$
Combined with \eqref{scprtheta} this makes $\Theta$ an antilinear map from $\cH^{pre}$ onto $\cH_-^{pre}$. 
This map is isometric, that is $||\Theta f||_-=||f||$ holds. Therefore $\Theta$ extends to an 
antiunitary map from $\cH$ onto $\cH_-$, the Hilbert space completion of $\cH^{pre}_-$.
For comparison recall that in non-relativistic quantum mechanics time reversal also involves taking the complex 
conjugate by which time reversal becomes an antiunitary operator.

\vspace{0.3cm}


\subsection{The {\it ``time zero''} Hilbert space and field operators.}\label{sec:timezero}~~\\

The Banach algebra $C(\pi_0(\fC_{+,\kappa}),\C)$ may be viewed as a closed subalgebra of 
$C(\fC_{+,\kappa},\C)$.
In fact an element $f_0\in C(\pi_0(\fC_{+,\kappa}),\C)$ may be viewed as function of 
$\pi_0 \underline{z}\in \pi_0(\fC_{+,\kappa})$ when 
$\underline{z}\in \fC_{+,\kappa}$. With this understanding the scalar product $\langle\cdot,\cdot\rangle_\vee$ 
reduces to a scalar product on $C(\pi_0(\fC_{+,\kappa}),\C)\times C(\pi_0(\fC_{+,\kappa}),\C)$, which we denote by 
$\langle\cdot,\cdot\rangle_{\vee,0}$.

\begin{proposition}\label{vee0}
The scalar product $\langle\cdot,\cdot\rangle_{\vee,0}$ is positive definite.
\end{proposition}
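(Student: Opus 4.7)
The plan is to reduce $\langle f_0, f_0\rangle_{\vee,0}$ to the integral of a pointwise non-negative integrand against a strictly positive weight, and then to upgrade non-negativity to strict positivity by a continuity argument.

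First I would exploit the fact that the reflection $\vartheta$ is the identity on $K_0 = K_+ \cap K_-$. Consequently, for $\sigma^1 \in K_0$ the component $(\vartheta\underline{z})^{\sigma^1} = z^{\sigma^1}$, so $\pi_0((\vartheta\underline{z})_+) = \pi_0\underline{z} = \pi_0(\underline{z}_+)$. Since $f_0 \in C(\pi_0(\fC_{+,\kappa}),\C)$ depends on $\underline{z}_+$ only through $\pi_0 \underline{z}_+$, this yields
\[
f_0((\vartheta\underline{z})_+) = f_0(\underline{z}_+) = f_0(\pi_0\underline{z})
\]
for every $\underline{z} \in \fC_\kappa$. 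Substituting into the definition of the scalar product gives
\[
\langle f_0, f_0\rangle_{\vee,0} = \frac{1}{Z_{\fC_\kappa}(\gamma,\lambda)}\int_{\fC_\kappa} |f_0(\pi_0\underline{z})|^2\, \e^{-H_{\gamma,\lambda}(\underline{z})}\, \rd\mu(\underline{z}).
\]
Since $H_{\gamma,\lambda}$ is bounded on the compact set $\fC_\kappa$ (Theorem \ref{theo:smooth}), the weight $\e^{-H_{\gamma,\lambda}}$ is strictly positive and bounded away from zero. The integrand is manifestly non-negative, so $\langle f_0, f_0\rangle_{\vee,0} \ge 0$ follows at once.

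For strict positivity, suppose $\langle f_0, f_0\rangle_{\vee,0} = 0$. Then $|f_0 \circ \pi_0|^2$ vanishes $\rd\mu$-almost everywhere on $\fC_\kappa$. By Theorem \ref{theo:convex} the set $\cM(K^n)$ is open in $\R^{n_1(K^n)}$ and $\fC_\kappa$ has non-empty interior (for $\kappa$ large enough to make the statement non-vacuous, since $\bigcup_\kappa \fC_\kappa = \cM(K^n)$). The function $f_0 \circ \pi_0$ is continuous on $\fC_\kappa$; since a continuous function on a set with non-empty interior that vanishes Lebesgue-almost everywhere must vanish identically, we conclude $f_0 \circ \pi_0 \equiv 0$ on $\fC_\kappa$. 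Because $\pi_0(\fC_\kappa) = \pi_0(\pi_+(\fC_\kappa)) = \pi_0(\fC_{+,\kappa})$, this forces $f_0 \equiv 0$ on its entire domain, proving positive definiteness.

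The routine steps are the substitution and the continuity-from-a.e.-zero argument; the main point is simply the observation that the $\vartheta$-invariance of $K_0$ collapses the twisted bilinear form into an ordinary $L^2$-type integral against a positive measure. There is no genuine obstacle beyond checking that the cutoff region has non-empty interior, which is built into the definition of $\fC_\kappa$ and the structure theorem for $\cM(K^n)$.
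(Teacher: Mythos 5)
Your first step coincides exactly with the paper's: using that $\vartheta$ is the identity on $K_0$ to get $f_0((\vartheta\underline{z})_+)=f_0(\pi_0\underline{z})$ and hence the formula \eqref{hilbert0}, after which non-negativity is immediate since $\e^{-H_{\gamma,\lambda}}$ is bounded away from zero on the compact set $\fC_\kappa$. You diverge at the definiteness step. The paper applies Fubini with the decomposition $H=H_++H_-$ (the same mechanism as in the proofs of Theorem \ref{theo:main} and Proposition \ref{prop:module}), integrates out the variables $\widetilde{\underline{z}}_\pm$ to produce the strictly positive fiber weight $G(\underline{z}_0)$, and exhibits $\langle f_0,f_0\rangle_{\vee,0}$ as a weighted $L^2$ norm $\frac{1}{Z_{\fC_\kappa}(\gamma,\lambda)}\int |f_0(\underline{z}_0)|^2\,G(\underline{z}_0)^2\,\prod_{\sigma^1\in K_0}\rd z^{\sigma^1}$ in the time-zero variables alone, which vanishes iff $f_0=0$. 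You instead stay upstairs on $\fC_\kappa$ and argue that a continuous function vanishing $\rd\mu$-a.e. must vanish identically. Your route is more elementary (no $G$, no $\widetilde{\fC}_{\pm,\kappa}$), while the paper's factorization is structurally informative, displaying the time-zero form as an honest $L^2$ inner product over $\pi_0\fC_\kappa$.

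There is, however, one step you assert that is false as a general principle: a continuous function vanishing Lebesgue-a.e. on a compact set with non-empty interior need \emph{not} vanish on the whole set — the a.e. hypothesis only forces vanishing on $\overline{\operatorname{int}\fC_\kappa}$, and a compact set can contain points (say, a lower-dimensional piece of its boundary locus $\det A(\underline{z}(\sigma))=1/\kappa$) outside the closure of its interior, where $f_0\circ\pi_0$ could a priori be nonzero; your conclusion $f_0\equiv 0$ on all of $\pi_0(\fC_{+,\kappa})$ would then fail. The gap is fixable with an observation you did not make: $\fC_\kappa$ is \emph{convex}. Indeed $\underline{z}\mapsto A(\underline{z}(\sigma))$ is affine by \eqref{amatrix}, and $\det$ is log-concave on positive definite matrices (Minkowski's determinant inequality), so each constraint set $\{\underline{z}\,|\,\det A(\underline{z}(\sigma))\ge 1/\kappa\}$ is convex, as are the norm constraints in \eqref{compact}; a convex set with non-empty interior is contained in the closure of its interior, so your a.e.-plus-continuity argument then yields $f_0\circ\pi_0\equiv 0$ on all of $\fC_\kappa$, and since $\pi_0(\fC_\kappa)=\pi_0(\fC_{+,\kappa})$ you are done. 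Your proviso ``for $\kappa$ large enough'' is genuinely needed and should be promoted to a standing hypothesis: for critical $\kappa$ the set $\fC_\kappa$ can be non-empty yet of $\rd\mu$-measure zero, in which case $Z_{\fC_\kappa}(\gamma,\lambda)=0$ and the form \eqref{scalarproduct} is not even defined (the paper tacitly assumes this as well).
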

This proposition states that $\cN\cap C(\pi_0(\fC_{+,\kappa}),\C))=\{0\}$.
The proof will be given Appendix \ref{2prop}.

Let $\cH_0$ be the Hilbert space completion of $C(\pi_0(\fC_{+,\kappa}),\C)$, the latter being viewed 
as a pre-Hibert space with norm obtained from the scalar product
$\langle\cdot,\cdot\rangle_{\vee,0}$. 

$\cH_0$ is a closed subspace of $\cH$. In fact, the restriction to $\cH_0$ 
of the scalar product $\langle\cdot,\cdot\rangle$ on $\cH$ is just $\langle\cdot,\cdot\rangle_{\vee,0}$.
We do not know whether these spaces agree.
Because $\e_\kappa\in C(\pi_0(\fC_{+,\kappa}),\C))$ the vector $\Omega_\kappa$ lies in $\cH_0$.
Let $\cA_0=C(\cM(K_0))$. By repeating the arguments used after \eqref{desid} and observing that 
$\cN\cap C(\pi_0(\fC_{+,\kappa}),\C))=\{0\}$, $C(\pi_0(\fC_{+,\kappa}),\C))$ is an ideal in $\cA_0$.
Therefore $\Phi_0(f_0)$ with $f_0\in\cA_0$ is well defined as an densely defined operator by setting 
$$
\Phi_0(f_0)\psi(f_0^\prime)=\psi(f_0\,f_0^\prime) \quad f_0^\prime\in C(\pi_0(\fC_{+,\kappa}),\C)).
$$
The vacuum $\Omega_\kappa$ is a cyclic vector in $\cH_0$ for 
the family $\Phi_0(f_0),\,f_0\in\cA_0$. In particular the Regge curvatures for the 
metric spaces $(K_0,\underline{z}_0)$ define the observable $\Phi_0(R(K_0))$. Similarly one obtains an 
observable for the volume.

In contrast to the construction of $\cH$ no time direction is singled out, that is for the construction 
of 
$\cH_0$ the reflection of $\vartheta$ is not needed. However, time reflection on the level 
of Hilbert space can still be formulated. In fact, $\Theta$ restricted to the pre-Hilbert space is just complex 
conjugation, $(\Theta f_0^\prime)(\underline{z}_0)= \overline{f_0^\prime(\underline{z}_0)}$. 
So $\Theta$ extends to an
antiunitary map of $\cH_0$ onto itself with $\Theta^2=id$.

\begin{remark}
In relativistic quantum field theory one usually considers fields as operator valued distributions satisfying the 
G\r{a}rding-Wightman axioms, see \cite{StreaterWightman,GardingWightman}.
Thus one deals with smeared-out fields in $d$ space-time dimensions, for bosonic fields they are of the typical form
$$
\int_{\R^d}\phi(\vec{x},t)f(\vec{x}, t)\;\rd^{d-1} \vec{x}\;\rd t,
$$
where $f$ is in Schwartz space.
Also (smeared-out) time zero fields 
$$
\int_{\R^{d-1}}\phi(\vec{x},0)f(\vec{x})\;\rd^{d-1}\vec{x},
$$
have been studied for some time. They have the conceptual disadvantage, that they do not exhibit 
relativistic covariance directly. Free time zero bosonic fields have been analyzed in 
detail and there the time zero Hilbert space agrees with the full Hilbert space. They also exist as operators 
in the two dimensional model $P(\phi)_2$ and as quadratic forms in $\phi^4_2$ theories,
see {\it e.g.} \cite{GlimmJaffe}. 
\end{remark}


\begin{appendix}
\section{Proof of Theorem \ref{theo:main}}\label{sec:app}
  
First we extend the notation introduced in \eqref{zpm}.  
For given $\underline{z}$ write 
$$
\underline{z}_+=\left(\widetilde{\underline{z}}_+,\underline{z}_0\right),\qquad
\underline{z}_-=\left(\widetilde{\underline{z}}_-,\underline{z}_0\right)
$$
with
$$
\widetilde{\underline{z}}_\pm=\widetilde{\pi}_\pm\underline{z}=
\left\{z^{\sigma^1}\right\}_{\sigma^1\in K_\pm,\;\sigma^1\notin K_0},\qquad
\underline{z}_0=\pi_0\underline{z}= \left\{z^{\sigma^{1}}\right\}_{\sigma^1\in K_0},
$$
such that 
$$
\underline{z}=\left(\widetilde{\underline{z}}_+,\underline{z}_0,
\widetilde{\underline{z}}_-\right),\quad \vartheta\underline{z}=
(\underline{z}_-,\widetilde{\underline{z}}_+)
$$
and with the ordering introduced above. Set
\begin{equation}\label{tildeC}
\widetilde{\fC}_{+,\kappa}=\widetilde{\pi}_+(\fC_{+,\kappa})=
\Big\{\widetilde{\pi}_+\underline{z}\:\Big|\:\underline{z}\in \fC_{+,\kappa}\; \Big\}.
\end{equation}
$\widetilde{\fC}_{-,\kappa}$ is defined analogously.

Obviously 
$(\widetilde{\underline{z}}_+,\underline{z}_0)\in\cM(K_+),\;(\widetilde{\underline{z}}_-,\underline{z}_0)\in \cM(K_-)$ 
and $\underline{z}_0\in\cM(K_0)$ when $\underline{z}\in\cM(K^n)$. 
More importantly the following converse is valid. In the next two lemmas we change the notation slightly and write
$\underline{w}_-=(\underline{w}_0,\widetilde{\underline{w}}_-)$.
\begin{lemma}\label{lem:conv}
Assume $\underline{w}_+\in\cM(K_+)$ and 
$\underline{w}_-\in\cM(K_-)$, such that $\pi_0\underline{w}_+=\pi_0\underline{w}_-$ holds and hence  
$\underline{w}_0\doteq\pi_0\underline{w}_+\in\cM(K_0)$ is valid. Then 
$$
\underline{w}=(\widetilde{\underline{w}}_+,\underline{w}_-)=(\widetilde{\underline{w}}_+,\underline{w}_0,\widetilde{\underline{w}}_-)=
(\underline{w}_+,\widetilde{\underline{w}}_-)
\in\cM(K^n).
$$
\end{lemma}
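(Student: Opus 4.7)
The plan is to apply the explicit characterization of $\cM(K^n)$ given in \eqref{metricspace2}, which reduces membership to checking a family of local non-holonomic constraints $\det A(\underline{z}(\sigma))>0$, one for each simplex $\sigma\in K^n$ of dimension between $1$ and $n$. First I would note that the compatibility hypothesis $\pi_0\underline{w}_+=\pi_0\underline{w}_-$ is exactly what is needed to even define $\underline{w}$ as a function on $\Sigma^1(K^n)$: on an edge $\sigma^1\in K_0=K_+\cap K_-$, the two prescriptions $\underline{w}_+(\sigma^1)$ and $\underline{w}_-(\sigma^1)$ agree, so the assembled $\underline{w}=(\widetilde{\underline{w}}_+,\underline{w}_0,\widetilde{\underline{w}}_-)$ is unambiguous and positive.

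The main observation is that the gluing $K^n=K_+\uplus_{K_0}K_-$ creates no new simplexes: by the construction described after \eqref{ref:glue}, every simplex of $K^n$ is (the image of) a simplex of $K_+$ or of $K_-$, with overlap precisely at simplexes of $K_0$. Thus, for any $\sigma\in K^n$ with $1\le\dim\sigma\le n$, either $\sigma\in K_+$ or $\sigma\in K_-$. In the first case every $1$-face of $\sigma$ belongs to $\Sigma^1(K_+)$, so by construction $\underline{w}(\sigma)=\underline{w}_+(\sigma)$; since $\underline{w}_+\in\cM(K_+)$, the characterization \eqref{metricspace2} applied to $K_+$ gives $\det A(\underline{w}_+(\sigma))>0$, hence $\det A(\underline{w}(\sigma))>0$. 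The case $\sigma\in K_-$ is symmetric. Simplexes lying in $K_0$ are handled by either argument, and the compatibility $\pi_0\underline{w}_+=\pi_0\underline{w}_-$ ensures both give the same (positive) determinant.

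Having verified the local positivity conditions for every simplex of $K^n$, \eqref{metricspace2} yields $\underline{w}\in\cM(K^n)$, proving the lemma. There is no real obstacle here; this is a bookkeeping statement which asserts that the metric cone is characterized simplex-wise, and therefore assembles well under simplicial gluing. The content of the lemma is precisely what will later permit the $+$ and $-$ halves of the metric variables to be integrated independently (conditional on the common boundary values $\underline{z}_0$) in the proof of Theorem~\ref{theo:main}.
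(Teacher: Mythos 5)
Your proof is correct: the paper actually omits the proof of Lemma \ref{lem:conv} entirely (``The proofs of these two lemmas are trivial and will be omitted''), and your argument via the simplex-wise characterization \eqref{metricspace2} --- every simplex of $K^n=K_+\uplus_{K_0}K_-$ already lies in $K_+$ or in $K_-$, its edges lie in the same half, so each constraint $\det A(\underline{w}(\sigma))>0$ is inherited directly from $\underline{w}_+\in\cM(K_+)$ or $\underline{w}_-\in\cM(K_-)$, with the compatibility $\pi_0\underline{w}_+=\pi_0\underline{w}_-$ making the assembled $\underline{w}$ well defined on $K_0$ --- is precisely the intended bookkeeping the author had in mind. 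Nothing is missing.
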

The preceding lemma carries over to the cut-off situation in the following way. 
\begin{lemma}\label{lem:conv1}
If $\underline{w}_\pm\in\fC_{\kappa}(K_\pm)$ with $\pi_0 \underline{w}_+=\pi_0 \underline{w}_-$ then 
$\underline{w}=(\widetilde{\underline{w}}_+, \underline{w}_-)
=(\underline{w}_+,\widetilde{\underline{w}}_-)\in\fC_\kappa(K^n)$.
 $\pi_0$ restricted to $\fC_\kappa(K^n)$ is a map into $\fC_\kappa(K_0)$, which is not necessarily surjective.
\end{lemma}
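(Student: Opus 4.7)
The plan is to reduce the first claim to Lemma \ref{lem:conv} followed by a direct verification of the two inequalities in \eqref{compact}. Since $\underline{w}_\pm \in \fC_\kappa(K_\pm) \subseteq \cM(K_\pm)$ and $\pi_0 \underline{w}_+ = \pi_0 \underline{w}_-$ by hypothesis, Lemma \ref{lem:conv} immediately gives $\underline{w} = (\widetilde{\underline{w}}_+, \underline{w}_0, \widetilde{\underline{w}}_-) \in \cM(K^n)$. What remains is to lift this from $\cM(K^n)$ to $\fC_\kappa(K^n)$.

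For the determinant bound, the decomposition $K^n = K_+\uplus_{K_0}K_-$ means that every $\sigma\in K^n$ is either a simplex of $K_+$ or of $K_-$ (both, if $\sigma\in K_0$, where the two reconstructions agree by the hypothesis $\pi_0\underline{w}_+=\pi_0\underline{w}_-$). Since the gluing does not alter edge-lengths, $\underline{w}(\sigma)$ equals $\underline{w}_+(\sigma)$ or $\underline{w}_-(\sigma)$; in either case $\det A(\underline{w}(\sigma))\ge 1/\kappa$ by the assumption on $\underline{w}_\pm$. For the norm bound, by construction $\pi_+\underline{w}=\underline{w}_+$ and $\pi_-\underline{w}=\underline{w}_-$, so $\max(\|\pi_+\underline{w}\|,\|\pi_-\underline{w}\|)\le\kappa$. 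This yields $\underline{w}\in\fC_\kappa(K^n)$.

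For the second claim, given $\underline{z}\in\fC_\kappa(K^n)$ the components of $\pi_0\underline{z}$ form a sub-collection of those of $\pi_+\underline{z}$, hence $\|\pi_0\underline{z}\|\le\|\pi_+\underline{z}\|\le\kappa$; and for any $\sigma\in K_0\subset K^n$ one has $\det A((\pi_0\underline{z})(\sigma))=\det A(\underline{z}(\sigma))\ge 1/\kappa$, so $\pi_0\underline{z}\in\fC_\kappa(K_0)$ (with $\fC_\kappa(K_0)$ defined by the obvious pseudomanifold-intrinsic analog of \eqref{compact}).

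The non-surjectivity assertion is a side remark and requires only pointing to an obstruction rather than a full classification. The natural one is that extending a given $\underline{z}_0\in\fC_\kappa(K_0)$ to an element of $\fC_\kappa(K^n)$ forces the additional edge-lengths $\widetilde{\underline{z}}_\pm$ to lie in the narrow window $[1/\sqrt{\kappa},\sqrt{\kappa}]$ while still satisfying the $\det\ge 1/\kappa$ constraints inside every $n$-simplex adjacent to $K_0$; for $\underline{z}_0$ sufficiently close to the boundary of the allowed region, no such $\widetilde{\underline{z}}_\pm$ need exist. The only mildly subtle point in the whole argument is the bookkeeping of three cut-off definitions on three different complexes; once one notes that $\pi_0\underline{w}_+=\pi_0\underline{w}_-$ makes the two reconstructions of $\underline{w}$ coincide, the rest is direct inheritance from the hypothesis.
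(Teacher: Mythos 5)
Your proof is correct and amounts to exactly the routine verification the paper itself omits (it declares Lemmas \ref{lem:conv} and \ref{lem:conv1} trivial and gives no proof): you check membership in $\fC_\kappa(K^n)$ simplex by simplex against \eqref{compact}, using that every $\sigma\in K=K_+\uplus_{K_+\cap K_-}K_-$ lies in $K_+$ or $K_-$, that the gluing preserves edge data (the two reconstructions agreeing on $K_0$ by the hypothesis $\pi_0\underline{w}_+=\pi_0\underline{w}_-$), and that $\pi_\pm\underline{w}=\underline{w}_\pm$ handles the norm bound, with your explicit reading of the otherwise undefined intrinsic cut-offs $\fC_\kappa(K_\pm)$, $\fC_\kappa(K_0)$ being the only sensible one. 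One tightening for the side remark on non-surjectivity: your determinant-window heuristic is inconclusive as stated, since for a base $(n-1)$-simplex one has $\det A(\underline{z}(\sigma^n))=\det A(\underline{z}(\sigma^{n-1}))\,h^2$ with $h$ the apex height, so even $\det A(\underline{z}(\sigma^{n-1}))=1/\kappa$ admits extensions once $h\ge 1$ is available for large $\kappa$; a clean concrete witness is instead norm saturation, namely if $\|\underline{z}_0\|=\kappa$ and $K_+\setminus K_0$ contains at least one edge (each carrying $z^{\sigma^1}=\det A(\underline{z}(\sigma^1))\ge 1/\kappa$), then any extension satisfies $\|\underline{z}_+\|^2\ge \kappa^2+1/\kappa^2>\kappa^2$, so no extension lies in $\fC_\kappa(K^n)$.
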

The proofs of these two lemmas are trivial and will be omitted.
In what follows $\kappa$ will be fixed. Decompose the curvature $\cR$ and the volume $V$ as follows
 \begin{align}\label{app:RVdecomp1}
  \cR(\underline{z})&=\cR_+(\underline{z})+\cR_-(\underline{z})\\\nonumber
  V(\underline{z})&=V_+(\underline{z})+V_-(\underline{z})
 \end{align}
with
\begin{align}\label{app:RVdecomp2}
\cR_\pm(\underline{z})&=\quad\sum_{\sigma^{n-2}\in K_\pm\setminus\partial K_\pm}
\left(1-\sum_{\sigma^n\in K_\pm} (\sigma^{n-2},\sigma^{n})(\underline{z})\right)|\sigma^{n-2}|(\underline{z})
\\\nonumber
&\qquad\qquad +\sum_{\sigma^{n-2}\in \partial K\pm}
\left(\frac{1}{2}-\sum_{\sigma^n\in K_\pm}(\sigma^{n-2},\sigma^n)(\underline{z})\right)
|\sigma^{n-2}|(\underline{z}),\\\nonumber
V_\pm(\underline{z})&=\sum_{\sigma^n\in K_\pm}|\sigma^n|(\underline{z}).
\end{align}
Observe that by definition of a reflection there is no $n-$simplex contained in $K_0$, so the last sum is 
actually only 
over those $\sigma^n$ which are in $K_\pm\setminus K_0$. Similarly the sums inside the two braces are only over 
those $\sigma^n$ which are in $K_\pm\setminus K_0$. To sum up, both $R_\pm$ and $V_\pm$ actually depend only 
on $\underline{z}_\pm$. With this understanding, it is easy to convince oneself 
that $ \cR_+(K^n,\underline{z}_+)=\cR_+(\underline{z}_+)=\cR(K_+,\underline{z}_+)$.

We note that $\vartheta (K_\pm\setminus K_0)=K_\mp\setminus K_0$.
Obviously  under reflections $\cR_\mp(\underline{z})=\cR_\pm(\vartheta\underline{z})$ and 
$V_\mp(\underline{z})=V_\pm(\vartheta\underline{z})$.

Correspondingly $H_{\gamma,\lambda}(\vartheta\underline{z})=H_{\gamma,\lambda}(\underline{z})$ holds.
With the decomposition $H_{\gamma,\lambda}(\underline{z})=H_{+,\gamma,\lambda}(\underline{z})
+H_{-,\gamma,\lambda}(\underline{z})$ the relations $H_{\pm,\gamma,\lambda}(\vartheta\underline{z})=
 H_{\mp,\gamma,\lambda}(\underline{z})$ and 
\begin{equation}\label{app:Hdecomp}
 \e^{-H_{\gamma,\lambda}(\underline{z})}
 =\e^{-H_{-,\gamma,\lambda}(\underline{z})}\;\e^{-H_{+,\gamma,\lambda}(\underline{z})}
\end{equation}
are valid.

We are now prepared to complete the proof of the theorem. 
Observe that 
$$
\e^{-H_{\pm,\gamma,\lambda}(\underline{z})}
$$
is actually only dependent on $\underline{z}_\pm$. Therefore it makes sense to set
$$
\widetilde{f}(\underline{z}_+)=\e^{-H_{+,\gamma,\lambda}(\underline{z})}\,f(\underline{z}_+),\qquad
\widetilde{f}((\vartheta\underline{z})_+)
=\e^{-H_{-,\gamma,\lambda}(\underline{z})}\,f((\vartheta\underline{z})_+).
$$
For any $\underline{z}_0\in\pi_0 \fC_\kappa(K^n)$
define the non-empty sets $\widetilde{\fC}_{\kappa,\,\pm}(\underline{z}_0)
=\widetilde{\pi}_\pm\pi_0^{-1}\{\underline{z}_0\}$.
Then we obtain
\begin{align*}
&\langle f,f\rangle=\\\nonumber
&\int\limits_{\pi_0 \fC_\kappa(K^n)}\prod_{\sigma^1\in K_0} \rd z^{\sigma^1}
\left(\:\int\limits_{\widetilde{\fC}_{\kappa,\,-}(\underline{z}_0)}\overline{\widetilde{f}(\underline{z}_-)}
\prod_{\sigma^1\in K_-\setminus K_0} \rd z^{\sigma^1}\right)
\left(\:\int\limits_{\widetilde{\fC}_{\kappa,\,+}(\underline{z}_0)}\widetilde{f}(\underline{z}_+)
\prod_{\sigma^1\in K_+\setminus K_0}\rd z^{\sigma^1}\right).
\end{align*}
The terms in both braces are functions of $\underline{z}_0$ only. Moreover, since 
$\vartheta \widetilde{\fC}_{\kappa,+}(\underline{z}_0)=\widetilde{\fC}_{\kappa,-}(\underline{z}_0)$ or more precisely 
$\vartheta\widetilde{\pi}_+\underline{z}=\widetilde{\pi}_-\underline{z}$ for 
$\underline{z}\in\pi^{-1}_0\{\underline{z}_0\}$,
these terms are the complex conjugate of each other. Carrying out the 
last integral over $\pi_0\fC_{\kappa}(K^n)=\Ran \pi_0\subset \fC_\kappa(K_0)$ concludes 
the proof of the theorem.


\section{Proof of Propositions \ref{prop:module} and \ref{vee0}.}\label{2prop}

\subsection{Proof of Proposition \ref{prop:module}}~~\\
By Lemma \ref{nullspace} it suffices to prove $\langle f\eta,f\eta\rangle_\vee=0$
for $f\in\cA$ and $\eta\in\cN$.
If $\cN$ is trivial, $\cN=\{0\}$, there is nothing to prove. Otherwise by definition 
$\cA=C_0(\cM(K_0),\C)$.  By \eqref{scalarproduct}
$$
\langle f\eta,f\eta\rangle_\vee=\frac{1}{Z_{\fC_\kappa}(\gamma,\lambda)}
 \int_{\fC_\kappa}\overline{\eta((\vartheta\underline{z})_+)}\eta(\underline{z}_+)
 |f(\pi_0\underline{z})|^2
 \e^{-H_{\gamma,\lambda}(\underline{z})}\rd \mu(\underline{z}).
$$
With 
\begin{align*}
F(\underline{z}_0)&\doteq\int_{\widetilde{\fC}_{+,\kappa}}\eta(\widetilde{\underline{z}}_+)
\e^{-H_+(\gamma,\lambda)(\pi_+\underline{z})}
\prod_{\sigma^1\in K_+\setminus K_0}\rd z^{\sigma^1}\\\nonumber
&=\int_{\widetilde{\fC}_{-,\kappa}}\eta(\widetilde{\underline{z}}_-)
\e^{-H_-(\gamma,\lambda)(\pi_-\underline{z})}
\prod_{\sigma^1\in K_-\setminus K_0}\rd z^{\sigma^1},
\end{align*}
where the sets $\widetilde{\fC}_{\pm,\kappa}$ 
are defined in \eqref{tildeC}, we obtain 
$$
\langle f\eta,f\eta\rangle_\vee=\frac{1}{Z_{\fC_\kappa}(\gamma,\lambda)}\int_{\pi_0 \fC_\kappa}
\prod_{\sigma^1\in K_0} \rd z^{\sigma^1}
|F(\underline{z}_0)|^2
|f(\pi_0\underline{z})|^2.
$$
But since $\eta\in\cN$ the continuous function $F(\underline{z}_0)$ vanishes everywhere.

\subsection{Proof of Proposition \ref{vee0}.}~~\\
In view of \eqref{scalarproduct} we have 
\begin{equation}\label{hilbert0}
\langle f_0,f_0\rangle_{\vee,0}=  \frac{1}{Z_{\fC_\kappa}(\gamma,\lambda)}
 \int_{\fC_\kappa}|f_0(\pi_0\underline{z})|^2
 \e^{-H_{\gamma,\lambda}(\underline{z})}\rd \mu(\underline{z}).
\end{equation}
Set  
\begin{align*}
G(\pi_0\underline{z})&\doteq\int_{\widetilde{\fC}_{+,\kappa}}\e^{-H_+(\gamma,\lambda)(\pi_+\underline{z})}
\prod_{\sigma^1\in K_+\setminus K_0}\rd z^{\sigma^1}\\\nonumber
&=\int_{\widetilde{\fC}_{-,\kappa}}\e^{-H_-(\gamma,\lambda)(\pi_-\underline{z})}
\prod_{\sigma^1\in K_-\setminus K_0}\rd z^{\sigma^1}\\\nonumber
&\,>0,
\end{align*}
where the second equality follows from reflection invariance. Again the sets 
$\widetilde{\fC}_{\pm,\kappa}$ 
are defined in \eqref{tildeC}.
We insert $G$ into \eqref{hilbert0} and obtain
$$
\langle f_0,f_0\rangle_{\vee,0}=  \frac{1}{Z_{\fC_\kappa}(\gamma,\lambda)}
 \int_{\fC_\kappa}|f_0(\pi_0\underline{z})|^2\;G(\pi_0\underline{z})^2
 \rd \mu(\underline{z})\ge 0
$$
which vanishes if and only if $f_0=0$.


\section{Proof of Lemma \ref{lem:Tietze}}\label{app:Tietze}

By decomposing any complex valued function into its real and imaginary part, it suffices to prove that 
$C^1(\fC_{+,\kappa},\R)$ is dense in $C(\fC_{+,\kappa},\R)$.
In order not to burden the notation unnecessarily, from now on 
$\underline{x},\underline{y}$ will denote points in $\R^{\bar{n}}$ with $\bar{n}=n_1(K_+)$.
Consider any $f\in C(\fC_{+,\kappa},\R)$. We will invoke the Tietze extension lemma, see {\it e.g.} \cite{Jaenich}, 
by which $f$ has a continuous extension $F$ to
all of $\R^{\bar{n}}$, that is $F|_{\fC_{+,\kappa}}=f$. This is possible, since each $\R^k$ is a normal space.
Also $-||f||_{\sup}\le F\le ||f||_{\sup}$ holds.
We claim $F$ can be assumed to have compact support. In fact, in case $F$ is not already of compact support, choose a 
smooth function $0\le h$ on $\R$ with support in the interval $[0,1]$ satisfying 
$$
\int_{0}^1 h(r)\rd r=1.
$$
With  
$$
R_0=\sup_{\underline{x}\in \fC_{+,\kappa}}|\underline{x}|
$$
define the smooth function
$$
\chi(\underline{x})=1-\int_{0}^{|\underline{x}|-R_0-1}\;h(r)\;\rd r.
$$
It satisfies $\chi|_{\fC_{+,\kappa}}=1$, such that $(\chi\,F)|_{\fC_{+,\kappa}}=f$ holds, and the claim follows. 
So from now on $F$ will be assumed to have compact support.
Next choose a smooth function $0\le g(\underline{x})$ with support in $|\underline{x}|\le 1$ satisfying
$$
\int_{\R^{\bar{n}}}\,g(\underline{x}) \,\rd^{\bar{n}}\underline{x}=1
$$
and then set
$$
g_{\varepsilon}(\underline{x})=\varepsilon^{-\bar{n}}g(\varepsilon^{-1}\underline{x}).
$$
Let $\star$ denote convolution. The family
$$
F_\varepsilon=F\star g_\varepsilon,
$$
called a Friedrichs mollifier,     
consists of smooth functions on $\R^{\bar{n}}$. They approximate $F$ in the supremum norm on $\R^{\bar{n}}$, 
that is we claim
\begin{equation}\label{claim:Tietze}
\lim_{\varepsilon\downarrow 0}\sup_{\underline{x}\in \R^{\bar{n}}} 
|F(\underline{x})-F_\varepsilon(\underline{x})|=0.
\end{equation}
and therefore {\it a fortiori} $f$ in the supremum norm on $\fC_{+,\kappa}$. Obviously $F_\varepsilon$ has 
support in the compact set consisting of points with a distance of at most $\varepsilon$ to the support of $F$.

To prove \eqref{claim:Tietze} we introduce $F(t)$ by 
$$
F(\underline{x},t)=F(\underline{x})
-\int_{\R^{\bar{n}}}\,\rd^{\bar{n}}\underline{y}\;F(\underline{x}-t\underline{y})\;g_\varepsilon(\underline{y}).
$$
Obviously $F(t=0)=0$ while $F(t=1)=F-F_\varepsilon$. Therefore the identities
\begin{align*}
F(\underline{x})-F_\varepsilon(\underline{x})&=F(\underline{x},t=1)
=\int_0^1 \;\rd s \frac{\rd}{\rd s}F(\underline{x}, s)\\
&= \int_0^1\rd s\;\int_{\R^{\bar{n}}}\,\rd ^{\bar{n}}\underline{y}\:
\;\underline{y}\cdot\underline{\nabla}\,F(\underline{x}-s\underline{y})\; g_\varepsilon(\underline{y})
\end{align*}
give the estimate 
\begin{align*}
|F(\underline{x})-F_\varepsilon(\underline{x})|&\le \varepsilon||\underline{\nabla}F||_{\sup}
\end{align*}
for all $\underline{x}\in\R^{\bar{n}}$. {\it A fortiori} the estimate 
$$
||F-F_{\varepsilon}||_{\sup}\le \varepsilon ||\underline{\nabla}F||_{\sup}
$$
is valid for all $0<\varepsilon <1$.
Use has been made of the facts that $|\underline{y}|\le \varepsilon$ for $\underline{y}$ lying in the support of 
$g_\varepsilon$, that $0\le g_\varepsilon$ and that
$$
\int_{\R^{\bar{n}}}\,\rd ^{\bar{n}}\underline{y}\; g_\varepsilon(\underline{y})=1.
$$
This concludes the proof of Lemma \ref{lem:Tietze}.
\end{appendix}


\end{document}